\newcommand{\bp}{\begin{proof} \small }
\newcommand{\ep}{\end{proof} \normalsize}
\newcommand{\epx}{\end{proof} \small}
\newcommand{\bpa}{\begin{proofappx} \footnotesize }
\newcommand{\epa}{\end{proofappx} \small }
\newtheorem{theorem}{Theorem}
\newtheorem{proposition}{Proposition}
\newtheorem{corollary}{Corollary}
\newtheorem{assumption}{Assumption}
\newtheorem{definition}{Definition}
\newtheorem*{theorem*}{Theorem}
\newtheorem*{proposition*}{Proposition}
\newtheorem*{corollary*}{Corollary}
\newtheorem*{lemma*}{Lemma}
\newtheorem*{assumption*}{Assumption}
\newtheorem*{definition*}{Definition}
\newtheorem*{claim*}{Claim}
\newcommand{\be}{\begin{equation}}
\newcommand{\ee}{\end{equation}}
\newcommand{\bs}{\begin{subequations}}
\newcommand{\es}{\end{subequations}}
\newcommand{\bq}{\begin{eqnarray}}
\newcommand{\eq}{\end{eqnarray}}
\newcommand{\bqn}{\begin{eqnarray*}}
\newcommand{\eqn}{\end{eqnarray*}}
\newcommand{\ba}{\left[ \begin{array}}
\newcommand{\ea}{\\ \end{array} \right]}
\newcommand{\ben}{\begin{enumerate}}
\newcommand{\een}{\end{enumerate}}
\def\a{{\boldsymbol{a}}}
\def\real{{\mathchoice%
{\hbox{\rm\setbox1=\hbox{I}\copy1\kern-.45\wd1 R}}
{\hbox{\rm\setbox1=\hbox{I}\copy1\kern-.45\wd1 R}}
{\hbox{\scriptsize\rm\setbox1=\hbox{I}\copy1\kern-.45\wd1 R}}
{\hbox{\scriptsize\rm\setbox1=\hbox{I}\copy1\kern-.45\wd1 R}}}}
\def\Zint{{\mathchoice{\setbox1=\hbox{\sf Z}\copy1\kern-.75\wd1\box1}
{\setbox1=\hbox{\sf Z}\copy1\kern-.75\wd1\box1}
{\setbox1=\hbox{\scriptsize\sf Z}\copy1\kern-.75\wd1\box1}
{\setbox1=\hbox{\scriptsize\sf Z}\copy1\kern-.75\wd1\box1}}}
\newcommand{\complex}{ \hbox{\rm C\kern-0.45em\rule[.07em]{.02em}{.58em}%
\kern 0.43em}}
\newcommand{\congr}[1]{{\color{blue}#1}}
\newcommand{\congr}[1]{#}
\newcommand{\congc}[1]{{\color{red}(Cong: #1)}}
\newcommand{\congc}[1]{}
\begin{document}
%
\title{Designing Security-Aware Incentives for Computation Offloading via Device-to-Device Communication}

%
%
%

\author{Jie~Xu, Lixing~Chen, Kun~Liu, Cong~Shen
\thanks{J. Xu and L. Chen are with the Electrical and Computer Engineering Department, University of Miami, USA. K. Liu and C. Shen are with the Electronic Engineering and Information Science Department, University of Science and Technology of China.}
}

\maketitle

\begin{abstract}
Computation offloading via device-to-device (D2D) communication, or D2D offloading, has recently been proposed to enhance mobile computing performance by exploiting spare computing resources of nearby user devices. The success of D2D offloading relies on user participation in collaborative service provisioning, which incurs extra costs to users providing the service, thus mandating an incentive mechanism that can compensate for these costs. Although incentive mechanism design has been intensively studied in the literature, this paper considers a much more challenging yet less investigated problem in which selfish users are also facing interdependent security risks, such as infectious proximity-based attacks. Security cost is significantly different in nature from conventional service provisioning costs such as energy consumption, since security risks often depend on the collective behavior of all users. To this end, we build a novel mathematical framework by leveraging the combined power of game theory and epidemic theory to investigate the interplay between user incentives and interdependent security risks in D2D offloading, thereby enabling the design of security-aware incentive mechanisms. Our analysis discovers an interesting ``less is more'' phenomenon: although giving users more incentives promotes more participation, it may harm the network operator's utility. This is because too much participation may foster persistent security risks and as a result, the effective participation level does not improve. Our model and analysis shed new insights on the optimization of D2D offloading networks in the presence of interdependent security risks. Extensive simulations are carried out to verify our analytical conclusions.
\end{abstract}

\begin{IEEEkeywords}
Device-to-device, computation offloading, incentives, security, game theory, epidemic models
\end{IEEEkeywords}

%
\IEEEpeerreviewmaketitle

\section{Introduction}
More and more mobile applications nowadays demand resources (e.g. processer power, storage, and energy) that frequently exceed what a single mobile device can deliver. To meet the challenge of limited resources of individual mobile devices, mobile cloud computing (MCC) has been proposed and extensively studied in the literature (\cite{dinh2013survey} and references therein), which offloads computation tasks from mobile devices to the remote cloud for processing. To further reduce latency and enable precise location/context awareness, mobile edge computing (MEC)  \cite{mao2017mobile,chiang2016fog} has recently emerged as a new paradigm in which computation and storage capabilities are moved from the central cloud to the network edge devices such as base stations (BSs).

In both MCC and MEC, mobile users access computing services on the cloud or edge servers co-located with BSs through high-speed and ubiquitous wireless connection. However, the ever-growing number of mobile devices and volume of sensory data pose increasingly heavy traffic burden on the wireless network. Moreover, edge servers are limited in their computing power and storage capacity and hence are likely to be overloaded due to the increasing computation demand. To overcome these drawbacks, device-to-device (D2D) communication is exploited for offloading computation to nearby peer mobile devices, thereby fully unleashing the potential of the computation power on the mobile devices \cite{marinelli2009hyrax,shi2012serendipity,huerta2010virtual,li2014can}. This technique, known as D2D offloading, can not only increase the wireless network capacity but also alleviate computation burden from edge servers. However, the D2D architecture poses many challenges on computation offloading and two key challenges are incentives and security: because providing offloading service incurs extra costs (e.g. computation/transmission energy consumption) to the mobile users acting as servers, incentives must be devised to encourage selfish users to participate; because D2D offloading relies on ordinary mobile users whose security protection is much weaker than the operator network, D2D offloading is much more vulnerable to various types of attacks.

A commonly seen security risk in D2D offloading is proximity-based infectious attacks \cite{mtibaa2015friend,lu2014can,lu2016evolution}, which has been widely studied in mobile networks. In such attacks, mobile user equipments can get compromised by proximity-based mobile viruses when they are acting as D2D servers and consequently are unable to provide service to other users. Moreover, compromised users become new sources of attack when they communicate to other users via D2D in the future, thereby spreading the attack across the entire network. Although incentives and security are often treated as separate topics in the literature, they are indeed intricately intertwined in the D2D offloading system. On the one hand, the outcome of attack depends on users' collective decisions on the participation in D2D offloading and hence, the risk is interdependent among all users. Intuitively, more participation fosters faster and wider spread of the attack and hence may cause a greater damage to the overall system. On the other hand, the security risk shapes individual users' participation incentives. In view of a larger chance of being compromised, individual users may strategically reduce their participation levels, thereby degrading the performance of computation offloading. Although there is a huge literature on incentive mechanism design, systematic understanding of participation incentives under interdependent security risks and their impacts on the network performance is largely missing.

In this paper, we make the first attempt to investigate the interplay between incentives and interdependent security risks in D2D offloading  and design security-aware incentive mechanisms. To this end, we build a novel analytical framework by leveraging the combined power of game theory \cite{fudenberg1991game} and epidemic theory \cite{kermack1927contribution}.  Although we focus on D2D offloading networks in this paper, the techniques developed can be easily extended to other D2D systems. The main contributions of this paper are as follows:

(1) We focus on the utility maximization of the wireless network operator who provides both communication and computation services. The problem is formulated as a Stackelberg game in which the operator is the leader, who designs the incentive mechanism for D2D offloading, and the mobile users are the followers, who decide their D2D participation levels. Unlike conventional Stackelberg games, users not only perform best response to the operator's decision but also play another game among themselves due to the interdependent security risks.

(2) We characterize individual users' participation incentives under infectious proximity-based risks with tunable user foresightedness. Users' participation strategy is shown to have a threshold structure: a user is willing to provide wireless D2D offloading service if and only if the risk (i.e. the chance of serving a compromised requester) is low enough.

(3) We analyze the infection propagation dynamics when selfish users are strategically determining their participation levels. This analysis is in stark contrast with existing epidemic studies which assume non-strategic users obediently following prescribed and fixed actions. A key result is that there is a phase transition effect between persistent and non-persistent infection, which is substantially different in nature from the non-strategic user case.

(4) We discover an interesting ``Less is More'' phenomenon: although offering users a higher reward promotes participation, a too high participation level degrades the system performance and consequently reduces the operator's utility. This is because too much participation fosters \textit{persistent} infection and hence, the \textit{effective} participation level does not improve. Our result enables the optimal reward to be determined by solving a simple optimization problem.

(5) We perform extensive simulations to verify our analytical results for various mobility models, risk parameters and user heterogeneity levels.

The rest of this paper is organized as follows. Section II reviews related work. Section III presents the system model. Individual participation incentives are studied in Section IV. Section V investigates the epidemic dynamics. Section VI studies the optimal reward mechanism design problem. Simulations are presented in Section VII, followed by conclusions in Section VIII.

\section{Related Work}
The concept of offloading data and computation is used to address the inherent problems in mobile computing by using resource providers other than the mobile device itself to host the execution of mobile applications \cite{fernando2013mobile}. This paper focuses on D2D offloading which uses nearby mobile devices as resource providers via D2D communication \cite{marinelli2009hyrax,shi2012serendipity,chatzopoulos2016have}. For instance, ``Hyrax'' proposed in \cite{marinelli2009hyrax} explores the possibility of using a cluster of mobile phones as resource providers and shows the feasibility of such a mobile cloud. ``Serendipity'' \cite{shi2012serendipity} is another prominent work that proposes and implements a framework that uses nearby devices for distributed task computation. In general, computation offloading is concerned with what/when/how to offload a user's workload from its device to other resource providers (see \cite{huang2012dynamic,satyanarayanan2009case} and references therein). The problem studied in this paper is not contingent on any specific offloading technology. Rather, we design incentives in the presence of interdependent security risks and hence, our approach can be used in conjunction with any existing offloading technology.

D2D offloading benefits from the fact that mobile users in close proximity can establish direct wireless communication link over the licensed spectrum (\textit{inband}) or unlicensed spectrum (\textit{outband}) while bypassing the cellular infrastructure such as the BSs \cite{chang2017collaborative,chen2015computation}. This new communication paradigm can significantly improve the system throughput \cite{Min2011}, energy efficiency \cite{Feng2015}, fairness \cite{Poor2014} and overall QoS performance \cite{Asadi2014}. In practice, D2D communication has been implemented in industry products such as Qualcomm FlashLinQ \cite{FlashLinQ} and standardized by 3GPP \cite{Lin2014}. Enabled by D2D communication, D2D offloading can further alleviate computation burdens from the edge computing infrastructure \cite{li2014exploring}. A general consensus in the literature is that mobile users would have no incentive to participate in D2D service provisioning unless they receive satisfying rewards from the network operator \cite{li2015incentive}. However, despite the large body of work on interference management and resource allocation in D2D communication \cite{tanbourgi2014cooperative,zhang2013interference}, there are much fewer works to address the problem of providing incentives for users to provide D2D service. Our prior works \cite{xu2013token,mastronarde2016relay} design token-based incentive mechanism to promote wireless D2D relaying. A contract-based incentive mechanism is developed in \cite{zhang2015contract} to let users self-reveal their preferences. Market models are developed in \cite{li2015incentive} in which Stackelberg games and auctions are used to design incentive mechanisms for open markets and sealed markets, respectively. However, these existing works do not consider the interdependent security risks in the D2D interactions.


One of the greatest challenges for D2D offloading is the interdependent security risk among users. Although offloading in general faces many security and privacy issues, this paper focuses on proximity-based infectious attacks which rely on the cooperative interaction among users. Extensive research has shown that such attacks can be easily initiated in wireless mobile networks and cause significant damage \cite{mtibaa2015friend,lu2014can}. To model such attacks, we utilize the popular Susceptible-Infected-Susceptible (SIS) model \cite{kephart1991directed,wang2003epidemic,bailey1975mathematical,van2009virus} from the epidemics research community, which is a standard stochastic model for virus infections and widely-adopted to investigate computer virus/worm propagation.
Existing works in this regard can be divided into two categories. The first category adopts a mean-field approximation to study networks consisting of a large number of individuals \cite{kephart1991directed,bailey1975mathematical}. The second category tries to understand the influence of graph characteristics on epidemic spreading when users are interacting over a fixed topology \cite{wang2003epidemic}\cite{van2009virus}. Since mobile users are numerous and server-requester matching is short-lived and random due to user mobility and task arrival processes, the mean-field approach provides a good model approximation for the D2D offloading architecture. A classic result of these models is that there exists a critical effective spreading rate below which the epidemic dies out \cite{kephart1991directed,wang2003epidemic}. However, all these works study \textit{non-strategic} users who are following given and fixed actions. The present paper significantly departs from this strand of literature and studies \textit{strategic} agents who choose their participation levels to maximize their own utility.

\section{System Model}
\subsection{Network Model and Incentive Mechanism}
We consider a continuous time system and a wireless network in which mobile user equipments (UEs) generate computation tasks over time. We adopt a continuum population model for UEs to capture the large number of UEs in the network. The wireless network operator provides MEC service so that UEs can offload their data and tasks to the edge servers, which are often co-located with the BSs. When the edge server reaches its computation capacity or the wireless network is congested, the operator will try to employ D2D offloading, if possible, as a supplement in order to fulfill UEs' computation requests. In this case, the task data will be transmitted to nearby UEs that have spare computing resources via wireless D2D communication (e.g. Wifi-Direct \cite{wifidirect} or LTE-Direct \cite{ltedirect}), which is assumed to be error-free. To facilitate the exposition, we call the UE who requests the offloading service as the \textit{requester} and the UE who provides the service as the \textit{server}. The considered wireless network is \textit{dynamic} in two senses. First, UEs are moving in the network and hence the physical topology of the network is changing. We consider a generic mobility model that makes the users strongly mixed. Second, each UE can be a requester when it has demand and can also be a server when it is idle. As a result, the logical matchings of requesters and servers are also changing over time depending on both the physical topology and the demand arrival process. It is worth noting that there are a lot of concurrent D2D offloading instances going on at the same time in the network. Figure \ref{network} shows a snapshot of part of the network in which some UEs are requesters and some UEs are potential servers.

\begin{figure}
  \centering
  \includegraphics[width=2.5in]{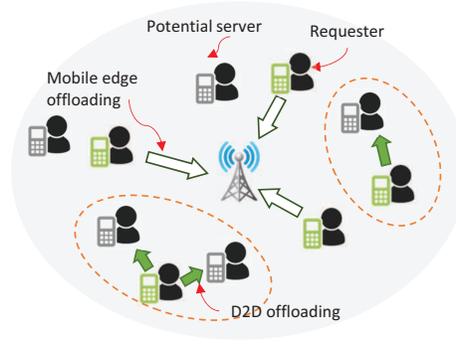}\\
  \caption{Snapshot of one of the cells in the network.}\label{network}
  \vspace{-10pt}
\end{figure}

For each task completed via D2D offloading, the operator obtains a benefit due to saved wireless bandwidth and computation cost. The expected value of this benefit is denoted by $b_0$. On the other hand, D2D offloading incurs an extra cost to the UE acting as the D2D server due to computation and transmission energy consumption and hence, selfish UEs are reluctant to provide the D2D service unless proper incentives are provided. Let the expected cost incurred to UE $i$ by completing one task be $c_i$, which differs across UEs. Although the realized cost depends on the specific computation task and the instantaneous wireless channel condition, for the UE's decision making purpose, we consider only the expected cost. In order to motivate participation in providing D2D offloading service, the operator offers reward to the participating UEs, in forms of free data or monetary payments, and how much reward a UE can receive depends on its participation level.

We use contracts as the incentive mechanism as in \cite{zhang2015contract}. Specifically, each UE $i$ makes a participation-reward bundle contract $(a^t_i, r(a^t_i))$ with the network operator at any time $t$ where $a^t_i$ is the  participation level chosen by UE $i$, and $r(a^t_i)$ is the unit time reward offered by the operator (e.g. monetary subsidy or increased data service rate). A contract $(a^t_i, r(a^t_i))$ requires that UE $i$ provides D2D offloading service with a rate up to $a^t_i$ tasks (from possibly different requesters) per unit time. The reward $r(a)$ is an increasing function of $a$. As a practical scheme, the operator adopts a throttled linear reward function
\begin{equation}
r(a) = \left\{
\begin{array}{ll}
r_0 a, &\forall a \leq R_{max}/r_0\\
R_{max}, &\forall a > R_{max}/r_0
\end{array}\right.
\end{equation}
where $r_0$ is the unit reward and $R_{max}$ is the maximum reward that the operator is willing to pay. Let $M\triangleq R_{max}/r_0$. Such a throttled scheme enables easy system implementation and similar schemes are widely adopted by operators in the real world\footnote{For example, ATT has a data reward program \cite{att} in which users earn extra data by downloading games and apps or shopping in participatory stores with a data transfer capping of 1,000MB per bill period.}. Nevertheless, our framework and analysis can also be applied to the reward scheme without throttling. In practice, contracts have a minimum duration. To simplify our analysis, we assume that the minimum contract duration is small enough so that UEs can effectively modify their contracts at any time.

\subsection{Participation Incentives in the Attack-Free Network}
The operator assigns computation tasks to UEs according to their chosen participation levels whenever D2D offloading is needed. Due to UE mobility and randomness in the task arrival, we model the D2D offloading request arrival assigned to UE $i$ as a Poisson process with rate $a^t_i$ (which is a result of UE decision). The unit time utility function of UE $i$ by choosing a participation level $a^t_i$ is thus defined as
\begin{align}\label{utility}
u_i(a_i^t) = v_i(r_0 a_i^t) - c_i a_i^t
\end{align}
where $v_i(\cdot)$ is UE $i$'s evaluation function of the reward, which differs across UEs. Clearly, selfish UEs have no incentives to participate at a level greater than $M$ if the reward is throttled and hence, we will focus on the range $a^t_i \in [0, M]$. We make the following assumptions on $v(\cdot)$.
\begin{assumption} (1) $v'_i(\cdot) > 0$, $v''(\cdot) < 0$. (2) $v(0) = 0$, $v_i(r_0 M) > c_i M$. (3) $v'_i(r_0 M) < c_i/r_0 < v'_i(0)$.
\end{assumption}
Part (1) is the standard diminishing return assumption, namely the evaluation function is increasing and concave in the received reward. Part (2) states that zero participation brings zero utility, namely $u_i(0) = 0$, and the maximum participation yields at least a non-negative utility, namely $u_i(M) > 0$. Part (3) is equivalent to $\arg\max_a u_i(a) \in (0, M)$, namely the optimal participation level lies in $(0, M)$. We denote $a^\text{AF}_i(r_0) \triangleq \arg\max_a u_i(a)$ as the optimal participation level in the attack free (AF) network, which can be easily computed.  We  make a further assumption on $a^\text{AF}_i(r_0)$ as follows.

\begin{assumption}
$a^\text{AF}_i(r_0)$ is increasing in $r_0$.
\end{assumption}
Assumption 2 states that increasing reward provides UEs with more incentives to participate as a D2D server. This is a natural assumption and can be easily satisfied by many evaluation functions. For instance, if $v_i(x) = \sqrt{x}$, then $a^\text{AF}_i(r_0) = \frac{r_0}{4 c_i^2}$ which is increasing in $r_0$.

\subsection{Attack Model}
Participating as D2D servers exposes UEs to proximity-based infectious risks since the task data is sent directly from peer UEs via D2D communication rather than the trusted operator-owned BSs. We consider an attack whose purpose is to make the D2D offloading service unusable. Therefore compromised UEs will not be able to provide D2D offloading service. Moreover, they may become new sources of attack when they request offloading services from normal UEs in the future.

To model this attack, we adopt the popular Susceptible-Infected-Susceptible (SIS) epidemic model. At any time $t$, each UE $i$ is in one of the two states $s^t_i \in \{\text{(S)usceptible, (I)nfected}\}$, which indicates whether the UE is normal or compromised. The UE state is private information and unknown to either the operator or other UEs in the network. Otherwise, compromised UEs can be easily excluded. If a normal UE provides D2D offloading service to a compromised UE, then the normal UE gets infected by the virus with a probability $\beta\in[0,1]$. We assume that a compromised server does not infect (or with a negligible probability infects) a normal requester because detection is much more effective on the requester side due to the significantly smaller data size of the returned computation result \cite{mao2017mobile}.

Once a UE $i$ is compromised, it has to go through a recovery process. During this process, UE $i$ cannot provide any D2D offloading service to other UEs. Moreover, UE $i$ suffers a recovery cost $q_i$ per unit time. However, UE $i$ can still make offloading requests so that the virus can be propagated to other UEs. Nevertheless, our framework can be easily extended to handle the case where some compromised UEs are completely down so that they cannot make requests before recovery. We assume that the recovery process duration is exponentially distributed with mean $1/\delta$. Recovered UEs re-enter the normal state and may be compromised again in the future. The UE state transition is illustrated in Figure \ref{transition}.

The parameters $\beta$ and $\delta$ describe the intrinsic risk level of the wireless network and we define the \textit{effective infection rate} as $\tau \triangleq \beta/\delta$. We will treat these risk parameters as fixed for most part of this paper. In this way, we focus on the incentive mechanism design for a network in a given risk environment. In the later part of this paper (Section VI), we will discuss how the incentive mechanism and security technologies can be jointly designed.

\begin{figure}
  \centering
  \includegraphics[width=3.5in]{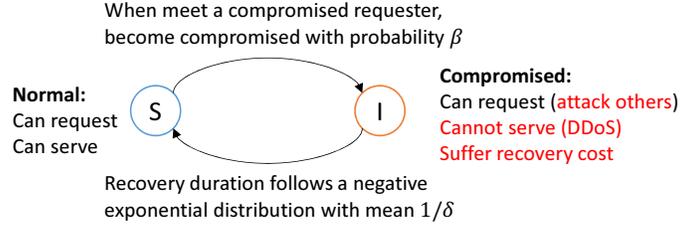}\\
  \caption{UE state transition.}\label{transition}
  \vspace{-10pt}
\end{figure}

\subsection{Problem Formulation as a Stackelberg Game}
The objective of the operator is to design the unit reward $r_0$ in order to maximize its own utility. The operator's utility is defined as:
\begin{align}
u_0(r_0) =  \mathbb{E}_{t,i} [(b_0 - r_0)\textbf{1}\{s^t_i = S\} a^t_i ]
\end{align}
where the expectation is taken over the UE attributes (i.e. the distribution of the evaluation function $v_i(\cdot)$ and D2D offloading cost $c_i$) and time. Clearly we must have $r_0 < b_0$. Otherwise, the operator will not adopt D2D offloading. The incentive mechanism design problem can be formulated as a Stackelberg game among a leader and an infinite number of followers (due to the continuum population model). The operator plays as the leader, which moves first and determines the reward mechanism $r_0$. The UEs play as the followers, which move next and choose their participation levels. In the attack-free network, the Stackelberg game can be represented by the following two-level maximization problem
\begin{align}
\max_{r_0} & ~~~(b_0  - r_0 )\mathbb{E}_i [ a^\text{AF}_i(r_0)]\\
\textit{s.t.} & ~~~a^\text{AF}_i(r_0) = \arg\max_a u_i(a|r_0), \forall i
\end{align}
where $\mathbb{E}_{t,i} [\textbf{1}\{s^t_i = S\} a^t_i ]$ is replaced with $\mathbb{E}_i [ a^\text{AF}_i(r_0)]$ since there is no attack and hence the UEs are never compromised. This problem is not difficult to solve as $a^\text{AF}_i(r_0)$ can be easily computed.

The presence of infectious attacks changes both the operator's objective function and UEs' participation incentives. First, because UEs may get compromised and consequently are not able to provide D2D computing service sometimes, the utility that the operator can reap from D2D offloading depends on not only the UEs' participation decisions but also the network security state (i.e. the fraction of normal UEs). Therefore, the operator's utility becomes $(b_0  - r_0)\mathbb{E}_i [ \textbf{1}\{s^t_i = S\} a^*_i(r_0)]$ and we call $\mathbb{E}_i [ \textbf{1}\{s^t_i = S\} a^*_i(r_0)]$ the \textit{effective} participation level of D2D offloading. Second, UE $i$'s incentive to participate in D2D offloading will be determined by a utility function $U_i(a_i)$ that incorporates the potential future infection (which will be characterized in the next section). What much complicates the problem is the fact that UEs face interdependent security risks -- the security posture of the whole network depends on not only the participation action of UE $i$ itself but also all other UEs' participation since the infection is propagated network-wide. Therefore, the utility function $U_i(a_i)$ should indeed be a function of all UEs' actions and hence, we denote it by $U_i(a_i, \a_{-i})$ where $\a_{-i}$, by convention, is the action profile of UEs except $i$. Formally, the Stackelberg game for the security-aware incentive mechanism design problem is:
\begin{align}
\max_{r_0} & ~~~(b_0  - r_0)\mathbb{E}_i [ \textbf{1}\{s^t_i = S\} a^*_i(r_0)]\\
\textit{s.t.} & ~~~a^*_i(r_0) = \arg\max_{a} U_i(a_i, \a_{-i}|r_0), \forall i
\end{align}
\textbf{Remark}: The formulated Stackelberg game is significantly different from conventional Stackelberg games. In our game, the followers (UEs) not only perform best response to the leader (i.e. the operator)'s decision, but also play a different yet coupled game among themselves due to the interdependent security risk. To solve this Stackelberg game, we will use backward induction to firstly investigate the participation incentives of UEs under the interdependent security risk and the resulting \textit{effective} participation level $\mathbb{E}_i [ \textbf{1}\{s^t_i = S\} a^*_i(r_0)]$, and then design the optimal reward mechanism.

\section{Individual Participation Incentives}
\subsection{Foresighted Utility}
If a UE is myopic and only cares about the instantaneous utility, then the UE will simply choose a participation level $a^*_i =  \arg\max_{a} [v_i(r_0 a) - c_i a]$ which maximizes the instantaneous utility when it is in the normal state. In such cases, $a^*_i = a^{AF}$. However, since the infectious attack may cause potential future utility degradation, the UE will instead be foresighted and care about the \textit{foresighted utility} \cite{mailath2006repeated}. Since the infection risk depends on the probability that a server UE meets a compromised requester UE, the fraction of compromised UE in the system at any time $t$, denoted by $\theta^t \in [0,1]$, plays a crucial role in computing the foresighted utility. Assume that requester-server matching for D2D offloading is uniformly random, then the probability that server UE $i$ meets a compromised requester UE is exactly $\theta^t$. We thus call $\theta^t$ the \textit{system compromise state} at time $t$. Note that $\theta^t$ is an outcome of all UE's participation decisions. The foresighted utility is defined as follows.

\begin{definition}
(\textbf{foresighted utility}) Given the system compromise state $\theta$, the foresighted utility of UE $i$ with discount rate $\rho$ when it takes a participation action $a_i$ is defined as
\begin{align}\label{def}
U(a_i, \theta) = \rho \int_{t=0}^\infty \big(\underbrace{\int_{\tau=0}^t e^{-\rho\tau} u_i(a_i)d\tau}_{\substack{\text{discounted sum utility} \\ \text{before being infected}}} + \underbrace{e^{-\rho t} U_I}_{\substack{\text{discounted continuation utility} \\ \text{after being infected}}}\big) \underbrace{\theta \beta a_i e^{-\theta \beta a_i t}}_{\substack{\text{exponential distribution}\\\text{due to Poisson arrival}}} dt
\end{align}
where
\begin{align}\label{def2}
U_I = \int_{t=0}^\infty \left(\int_{\tau=0}^t e^{-\rho \tau} (-q_i) d\tau + \rho^{-1} e^{-\rho t} U(a_i, \theta)\right) \delta e^{-\delta t} dt
\end{align}
\end{definition}

We explain the definition of foresighted utility below:

(i) Suppose that UE $i$ is compromised at time $t_0 + t$ where $t_0$ is the time when UE $i$ makes the current contract, then $\int_{\tau=0}^t e^{-\rho \tau} u_i(a_i) d\tau$ is the discounted sum utility that the UE can receive during the period $[t_0, t_0 + t]$ before it is compromised. The term $e^{-\rho \tau} \leq 1$ represents the discounting mechanism which decreases with $\tau$. A larger discount rate $\rho$ means that the discounting is greater. Using the exponential function $e^{-\rho \tau}$ to model discounting is the standard way for continuous time systems \cite{doya2000reinforcement,mailath2006repeated}.

(ii) $U_I$ is the utility that UE $i$ receives once it gets compromised. During the recovery phase, UE $i$ suffers a cost $-q_i$ and once it is recovered, it receives again the foresighted utility $U(a_i, \theta)$ which is discounted by $e^{-\rho t}$ where $t$ the duration of the recovery phase. Here we assumed \textit{bounded rationality}: when computing the foresighted utility, the UE believes that it will choose the \textit{same} participation level as before once it is recovered since it expects the system compromise state to stay the same in the future. In the steady state, this belief will in fact be correct.

(iii) $\theta \beta a_i e^{-\theta\beta a_i t}$ is the probability density function of being compromised at time $t$. The D2D request arrival is a Poisson process with rate $a_i$ according to the committed participation level. With probability $\theta$  UE $i$ meets a compromised requester UE and further with probability $\beta$ UE $i$ is compromised. As a basic property of the Poisson process, the infection process is also Poisson with arrival rate $\theta \beta a_i$.

(iv) The constant $\rho$ at the beginning of the right-hand side equation is a normalization term, which is due to $\int_{t=0}^\infty e^{-\rho t} dt = \rho^{-1}$.

\begin{figure}
  \centering
  \includegraphics[width=3.2in]{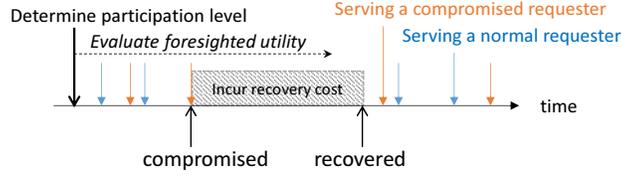}\\
  \caption{Illustration of foresightedness in participation level determination.}\label{foresighted}
  \vspace{-10pt}
\end{figure}

Figure \ref{foresighted} illustrates the role of foresightedness plays in determining the participation level of a UE. By solving \eqref{def} and \eqref{def2}, the foresighted utility can be simplified to
\begin{align} \label{FU}
U(a_i, \theta) = \frac{(\rho + \delta)u_i(a_i) - \beta\theta a_i q}{\rho + \delta + \beta \theta a_i}
\end{align}
\textbf{Remark}: The above form of foresighted utility generalizes the instantaneous utility and the time-average long-term utility, and can capture a wide spectrum of UE behaviors by tuning the discount rate $\rho$. When $\rho \to \infty$, the UE cares about only the instantaneous utility since future utility is infinitely discounted. In this case, $U(a_i, \theta)$ reduces to the myopic utility $u(a_i)$. When $\rho \to 0$, the UE does not discount at all and hence, $U(a_i, \theta)$ becomes the time-average utility $\frac{\delta u(a_i) - \beta\theta a_i q}{\delta + \beta \theta a_i}$, which is the same result by performing the stationary analysis of a continuous-time two-state Markov chain.

\subsection{Individual Optimal Participation Level}
In this subsection, we study the optimal participation level that UE $i$ will choose to maximize its foresighted utility.

\begin{proposition}
If the system compromise state $\theta \geq \frac{(r_0 v'_i(0) - c_i)(\rho + \delta)}{q_i \beta} \triangleq \bar\theta_i$, then the optimal participation level is $a^*_i(\theta) = 0$. Otherwise, the optimal participation level $a^*_i(\theta)$ is the unique solution of $u'_i(a_i)(\rho + \delta \beta\theta a_i) - u_i(a_i)\beta \theta - \beta\theta q_i = 0$, which increases as $\theta$ decreases.
\end{proposition}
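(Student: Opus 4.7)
The plan is to work directly from the closed form (\ref{FU}) and study the first-order condition (FOC) with respect to $a_i$, treating $\theta$ as a parameter. I would first differentiate $U(a_i,\theta)$ and clear the positive denominator $(\rho+\delta+\beta\theta a_i)^2$ to obtain the numerator FOC. After straightforward cancellation (the terms $\pm\beta^2\theta^2 a_i q_i$ drop out), dividing by $(\rho+\delta)$ yields exactly
\[
F(a_i,\theta) \triangleq u'_i(a_i)(\rho+\delta+\beta\theta a_i) - u_i(a_i)\beta\theta - \beta\theta q_i = 0,
\]
matching the stated equation (modulo the typo $\rho+\delta\beta\theta a_i \to \rho+\delta+\beta\theta a_i$). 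Evaluating at $a_i=0$ and using $u'_i(0)=r_0 v'_i(0)-c_i$ gives $F(0,\theta)=(r_0 v'_i(0)-c_i)(\rho+\delta)-\beta\theta q_i$, so $F(0,\theta)\le 0$ precisely when $\theta\ge \bar\theta_i$; in that regime the foresighted utility is non-increasing at the origin, so the corner $a^*_i(\theta)=0$ follows.

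Next I would establish existence and uniqueness of an interior maximizer when $\theta<\bar\theta_i$. The key observation is $\partial_{a_i}F(a_i,\theta)=u''_i(a_i)(\rho+\delta+\beta\theta a_i)$, because the $u'_i(a_i)\beta\theta$ terms cancel. By Assumption 1(1) $u''_i<0$, hence $F$ is strictly decreasing in $a_i$. At the upper end $a_i=M$, Assumption 1(3) gives $u'_i(M)<0$ and Assumption 1(2) gives $u_i(M)>0$, so each of the three terms in $F(M,\theta)$ is negative; combined with $F(0,\theta)>0$, the intermediate value theorem yields a unique root $a^*_i(\theta)\in(0,M)$, which must be the global maximizer on $[0,M]$ by the concavity of $U(\cdot,\theta)$ (itself inherited from the sign of the derivative changing once from positive to negative).

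Finally, monotonicity in $\theta$ follows from the implicit function theorem applied to $F(a^*_i(\theta),\theta)=0$:
\[
\frac{\partial a^*_i}{\partial \theta}=-\frac{F_\theta(a^*_i,\theta)}{F_{a_i}(a^*_i,\theta)}, \qquad F_\theta(a_i,\theta)=\beta\bigl[u'_i(a_i)a_i-u_i(a_i)-q_i\bigr].
\]
Since $u_i$ is concave with $u_i(0)=0$, the tangent-line inequality gives $u_i(a_i)\ge u'_i(a_i)a_i$, so $F_\theta<0$; together with $F_{a_i}<0$ from above, $\partial_\theta a^*_i<0$, i.e.\ $a^*_i(\theta)$ increases as $\theta$ decreases.

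The main obstacle I anticipate is not the algebra but the cleanest way to argue global optimality once the FOC has a unique root: it is tempting to appeal to joint concavity of $U(a_i,\theta)$ in $a_i$, but $U$ is a ratio and is not obviously concave. The cleanest route is to note that $U$'s derivative has the same sign as $F(a_i,\theta)$, and $F$ is strictly decreasing in $a_i$, so the derivative changes sign at most once and the unique interior root is indeed the global maximizer; the boundary case $\theta\ge \bar\theta_i$ is then just $F\le 0$ throughout $[0,M]$, forcing $a^*_i=0$.
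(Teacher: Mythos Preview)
Your proposal is correct and follows essentially the same approach as the paper: compute the derivative of \eqref{FU}, isolate the numerator $F(a_i,\theta)$ (the paper calls it $f$), show $\partial_{a_i}F=u''_i(a_i)(\rho+\delta+\beta\theta a_i)<0$, and check the signs of $F(0,\theta)$ and $F(M,\theta)$ via Assumption~1 to obtain the threshold $\bar\theta_i$ and the unique interior root. The only cosmetic difference is the monotonicity-in-$\theta$ step: the paper rearranges the FOC to $\frac{u_i(a_i)+q_i}{u'_i(a_i)}-a_i=\frac{\rho+\delta}{\beta\theta}$ and argues each side is monotone, whereas you apply the implicit function theorem with the tangent-line inequality $u_i(a_i)\ge u'_i(a_i)a_i$; both arguments are equivalent one-liners, and you also correctly flagged the typo $\rho+\delta\beta\theta a_i\to\rho+\delta+\beta\theta a_i$.
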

\begin{proof}
We omit the UE index $i$ in the subscript of $v_i(\cdot)$, $c_i$ and $q_i$ for brevity. Given the foresighted utility in \eqref{FU}, determining the optimal participation level boils down to investigating the first-order condition of \eqref{FU}. The derivative of $U(a_i, \theta)$ is
\begin{align}
U'(a_i, \theta) = (\rho + \delta)\frac{u'(a_i)(\rho + \delta + \beta \theta a_i) - u(a_i) \beta\theta - \beta\theta q}{(\rho+\delta+\beta \theta a_i)^2}
\end{align}
For brevity, let $f(a_i) = u'(a_i)(\rho + \delta + \beta \theta a_i) - u(a_i) \beta\theta - \beta\theta q$, which has the same sign of $U'(a_i, \theta)$. First, we have
\begin{align}
f'(a_i) = u''(a_i)(\rho + \delta + \beta\theta a_i) = r^2_0 v''(r_0 a_i)(\rho+\delta + \beta\theta a_i) < 0
\end{align}
Therefore, $f(a_i)$ is monotonically decreasing. Next, we investigate the signs of $f(M)$ and $f(0)$.
\begin{align}
f(M) &= u'(M)(\rho+\delta+\beta\theta M) - u(M)\beta\theta - \beta\theta q< 0
\end{align}
The inequality is because $u'(M) < 0$ and $u(M) > 0$ according to Assumption 1(2). Also,
\begin{align}
f(0) = u'(0)(\rho + \delta) - u(0)\beta\theta - \beta \theta q = (r_0 v'(0) - c)(\rho + \delta) -  \beta \theta q
\end{align}
Therefore, if $\theta < \frac{(r_0 v'(0) - c)(\rho + \delta)}{q \beta}$, then $f(0) > 0$. Otherwise, $f(0) \leq 0$. This means that if $\theta \geq \frac{(r_0 v'(0) - c)(\rho + \delta)}{q \beta}$, then the optimal $a^* = 0$ and otherwise, there exists an optimal participation level $a^* > 0$, which is the unique solution of
\begin{align}\label{aSol0}
u'(a_i)(\rho + \delta \beta\theta a_i) - u(a_i)\beta \theta - \beta\theta q= 0
\end{align}
To investigate the monotonicity of $a_i^*$ with $\theta$, we rewrite the above equation as follows
\begin{align}\label{aSol}
\frac{u(a_i) + q}{u'(a_i)} - a_i = \frac{\rho+\delta}{\beta\theta}
\end{align}
Notice that $u(a_i)$ does not have $\rho$, $\delta$, $\beta$ or $\theta$ in its expression according to \eqref{utility}. The first-order derivative of the left-hand side function of $a_i$ is
\begin{align}\label{uMono}
\frac{u'(a_i)u'(a_i) - (u(a_i) + q) u''(a_i)}{(u'(a_i))^2} -1
= \frac{-(u(a_i) + q)u''(a_i)}{(u'(a_i))^2} > 0
\end{align}
The last inequality is because $u(a) > 0, \forall a\in [0,M]$ and $u''(a) = r^2_0 v''(r_0 a_i)< 0$. Therefore the left-hand side of \eqref{aSol} is monotonically increasing in $a_i$. Since the right-hand-side is decreasing in $\theta$, $a_i$ decreases with the increase of $\theta$.
\end{proof}

Proposition 1 can be intuitively understood. A larger system compromise state $\theta$ implies a higher risk of getting compromised via D2D offloading and hence, UE $i$ has smaller participation incentives. In particular, if the system compromise state exceeds a threshold, then UE $i$ will refrain from participating in the D2D offloading. It is also evident that the operator can provide UE with more incentives to participate by increasing the reward $r_0$.

\begin{proposition}
If $\theta < \bar\theta_i$, then $a^*_i(\theta)$ is increasing in $\rho, \delta$ and decreasing in $\beta$.
\end{proposition}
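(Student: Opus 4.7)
The plan is to leverage the implicit characterization of $a^*_i(\theta)$ already derived inside the proof of Proposition~1 and apply a monotone-inversion argument, rather than re-differentiating the full foresighted utility $U(a_i,\theta)$ from \eqref{FU}.

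First, I would recall equation \eqref{aSol} from the Proposition~1 proof, namely
\begin{equation*}
g(a_i) \;\triangleq\; \frac{u_i(a_i) + q_i}{u'_i(a_i)} - a_i \;=\; \frac{\rho+\delta}{\beta\theta},
\end{equation*}
which is the first-order condition for $a^*_i(\theta)$ whenever $\theta < \bar\theta_i$. The key observation, already established in \eqref{uMono}, is that $g'(a_i) > 0$ on $[0,M]$, so $g$ is strictly increasing and admits a well-defined, strictly increasing inverse $g^{-1}$ on its range. Crucially, the function $g$ depends only on the primitives $v_i, c_i, q_i$ and not on $\rho,\delta,\beta$, so the entire dependence of $a^*_i$ on these risk/patience parameters is funnelled through the right-hand side:
\begin{equation*}
a^*_i(\theta) \;=\; g^{-1}\!\left(\frac{\rho+\delta}{\beta\theta}\right).
\end{equation*}

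Second, I would read off the three monotonicity claims by inspecting how $\frac{\rho+\delta}{\beta\theta}$ moves with each parameter while the others (including $\theta$) are held fixed: it is strictly increasing in $\rho$, strictly increasing in $\delta$, and strictly decreasing in $\beta$. Since $g^{-1}$ is strictly increasing, composing with $g^{-1}$ preserves the direction of monotonicity in $\rho$ and $\delta$ and reverses it in $\beta$, which yields exactly the three statements of the proposition. A brief intuitive sentence can be added: larger $\rho$ (more myopic) and larger $\delta$ (faster recovery) each shrink the effective weight of the future infection penalty, while larger $\beta$ raises the per-encounter infection probability; the implicit formula makes this trade-off quantitative.

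I do not anticipate a real obstacle here: the heavy lifting, namely verifying $g'(a_i) > 0$ and isolating the parameters on the right-hand side of the first-order condition, has already been done in the proof of Proposition~1. The only care required is to note that the boundary regime $\theta \geq \bar\theta_i$ is explicitly excluded by hypothesis, so we stay in the interior where the first-order condition holds with equality and $g^{-1}$ is applicable; in particular no appeal to implicit differentiation or sign-of-derivative computations on $U(a_i,\theta)$ is needed.
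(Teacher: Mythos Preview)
Your proposal is correct and follows essentially the same approach as the paper: the paper's proof is the single sentence ``These claims are direct results of the monotonicity of the left-hand side of \eqref{aSol},'' which is precisely the argument you have spelled out in detail. Your exposition makes explicit the two ingredients that the paper leaves implicit, namely that $g$ does not depend on $\rho,\delta,\beta$ and that $g^{-1}$ is strictly increasing, but there is no substantive difference in method.
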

\begin{proof}
These claims are direct results of the monotonicity of the left-hand side of \eqref{aSol}.
\end{proof}
Proposition 2 states that a higher attack success probability (larger $\beta$) and a slower recovery speed (smaller $\delta$) both decrease UE $i$'s incentives to participate (smaller $\alpha^*$). Importantly, Proposition 2 also reveals the impact of UE foresightedness on the participation incentives: being more foresighted (smaller $\rho$) decreases the UE's participation incentives (smaller $\alpha^*$) because the UE cares more about the potential utility degradation caused by the attack. Figure \ref{UE participation} illustrates the individual optimal participation level for various $\theta$ and foresightedness $\rho$ via numerical results. Note that for $\theta = 1$, it is still possible that a UE chooses a positive participation level since even if it is always interacting with a compromised UE, the attack success probability $\beta$ is not 1.

\begin{figure}
  \centering
  \includegraphics[width=2.5in]{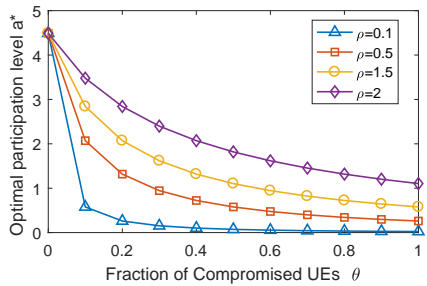}\\
  \caption{Illustration of individual optimal participation level.}\label{UE participation}
  \vspace{-10pt}
\end{figure}

\section{Interdependent Security Risks}
In this section, we study how the infection propagates in the system and the convergence of the system compromise state. Epidemic processes have been well investigated in the literature for different systems. Most of this literature assumes that users are obediently following a prescribed behavior rule. However, selfish UEs in the considered problem determine their participation levels to maximize their own foresighted utilities, thereby leading to significantly different results than conventional epidemic conclusions. To show this difference, we will first review the classic results of infection propagation in the context of the considered D2D offloading network. Then we will study the infection propagation processes for selfish UEs in two scenarios. In the first scenario, UEs observe the system compromise state at any time and hence make adaptive decisions according to the observed state. In the second scenario, UEs do not observe the system compromise state and hence have to conjecture this state based on the equilibrium analysis of a D2D offloading participation game.

\subsection{Attack under Given Participation Actions}
The evolution of the system compromise state $\theta^t$ depends on the strategy adopted by the UEs, the attack success rate $\beta$, the recovery rate $\delta$ as well as the initial state of the system $\theta^0$. It is obvious that if the system starts with an initial state $\theta^0 = 0$ (i.e. without initial attacks), then the system will remain in the state of zero compromise no matter what strategies are adopted by the UEs. Therefore, we will focus on the non-trivial case that the initial state $\theta^0 > 0$.

The network is said to be in the steady state if the system compromise state $\theta^t$ becomes time-invariant, denoted by $\theta^\infty$. The steady state reflects how the infectious attack evolves in the long-run. Existing works suggest that there exists a critical threshold $\tau_c$ for the effective infection rate $\tau$ such that if $\tau < \tau_c$, then the infectious attack extinguishes on its own even without external interventions, namely $\theta^\infty = 0$; otherwise, there is a positive fraction of compromised UEs, namely $\theta^\infty > 0$. This result is formally presented in our problem as follows.

\begin{proposition}
Assume that all UEs adopt the same fixed participation level $a$. Then there exists a critical effective infection rate $\tau_c = \frac{1}{a}$, such that if $\tau \leq \tau_c$, $\theta^\infty = 0$; otherwise, $\theta^\infty = 1 - \frac{\delta}{\beta a}$.
\end{proposition}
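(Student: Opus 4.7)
The plan is to reduce the problem to a standard mean-field SIS analysis specialized to the D2D setting. Under the continuum-population assumption together with the uniform random requester--server matching stipulated earlier in the paper, a susceptible UE with participation level $a$ serves requesters as a Poisson process of rate $a$; each served requester is compromised with probability $\theta^t$ and successfully transmits the infection with probability $\beta$. Multiplying the per-UE infection rate $\beta a\theta^t$ by the susceptible fraction $1-\theta^t$ and subtracting the aggregate recovery outflow $\delta\theta^t$ yields the mean-field dynamics
\begin{equation*}
\frac{d\theta^t}{dt} \;=\; \beta a\,\theta^t(1-\theta^t) \;-\; \delta\,\theta^t.
\end{equation*}

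Next I would locate the equilibria. Setting the right-hand side to zero factors as $\theta^\infty\bigl[\beta a(1-\theta^\infty)-\delta\bigr]=0$, giving the disease-free fixed point $\theta^\infty=0$ and the endemic candidate $\theta^\infty=1-\delta/(\beta a)$. The endemic root lies in $(0,1]$ exactly when $\beta a\geq \delta$, i.e., when $\tau\geq \tau_c:=1/a$; below that threshold it is infeasible and only the disease-free point remains.

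I would then identify which equilibrium is attracting by a local stability check. Linearizing at $\theta=0$ gives $d\theta^t/dt\approx (\beta a-\delta)\theta^t$, so the disease-free equilibrium is asymptotically stable iff $\tau<\tau_c$ and unstable once $\tau>\tau_c$. Because the right-hand side is a downward-opening quadratic in $\theta^t$, the sign structure is completely pinned down by the two roots found above: for any $\theta^0>0$ the trajectory is monotone and converges to $0$ when $\tau\leq \tau_c$ (the quadratic is non-positive on $(0,1]$) and to $1-\delta/(\beta a)$ when $\tau>\tau_c$ (the quadratic is positive on $(0,\theta^\infty)$ and negative on $(\theta^\infty,1]$), yielding the two cases of the proposition.

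The only genuine subtlety is the boundary $\tau=\tau_c$: the two equilibria coalesce at the origin and the linearization is degenerate. Here the ODE collapses to $d\theta^t/dt=-\beta a\,(\theta^t)^2$, which integrates in closed form and drives $\theta^t\to 0$, consistent with the stated $\theta^\infty=0$. The assumption I would lean on without re-deriving is the mean-field reduction from the underlying stochastic jump process on the UE population; this is licensed by the paper's continuum-population model and strongly-mixed mobility, and is the standard passage used throughout the SIS epidemic literature referenced in the related-work section.
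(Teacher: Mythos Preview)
Your proof is correct and follows essentially the same approach as the paper: both derive the mean-field ODE $d\theta^t/dt=\theta^t[(1-\theta^t)\beta a-\delta]$ and determine convergence by a sign analysis of the right-hand side. Your treatment is slightly more detailed (explicit linearization at $0$ and the closed-form handling of the degenerate boundary $\tau=\tau_c$), but the underlying argument is identical.
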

\begin{proof}
Consider the compromise state dynamics given a symmetric strategy $a$. For any $\theta$, the change in $\theta$ in a small interval $dt$ is
\begin{align}
d\theta = -\theta \delta dt + (1-\theta) \theta \beta a dt = \theta((1-\theta)\beta a - \delta)dt
\end{align}
Clearly, if $\tau > \frac{1}{a} \triangleq \tau_c$, then for any $\theta > 1 - \frac{\delta}{\beta a} \triangleq \theta^*$, $d\theta < 0$ and for $\theta < \theta^*$, $d\theta > 0$.  Therefore, the dynamic system must converge to $\theta^*$. If $\tau \leq \tau_c$, then for any $\theta > 0$, $d\theta < 0$. This means that the dynamic system converges to $0$.
\end{proof}
Notice that if $\beta < \delta$, then for all values of $a$, we must have $\tau \leq \tau_c$ and hence, the infection risk always extinguishes in the long-run.

\subsection{Attack with Strategic UEs and Observable System Compromise State}
In the considered D2D offloading system, UEs strategically determine their participation levels and hence, the infection propagation dynamics may be significantly different. In this subsection, we investigate the system dynamics assuming that UEs can observe the system compromise state $\theta^t$ at any time. In this case,  UEs adaptively change their participation levels by revising their contracts with the operator according to the observed system compromise state.

For the analysis simplicity, we assume that all UEs have the same evaluation function $v(\cdot)$, service provision cost $c$ and recovery cost $q$. We will investigate the heterogeneous case in the next subsection and in simulations. The system dynamics thus can be characterized by the following equation,
\begin{align}\label{systemdyn}
d \theta^t = -\theta^t \delta dt + (1-\theta^t) \beta\theta^t a^*(\theta^t) dt
\end{align}
where $a^*(\theta^t)$ is the best-response participation level given the current system compromise state $\theta^t$ according to our analysis in the previous section. In the above system dynamics equation, the first term $\theta^t \delta dt$ is the population mass of compromised UEs that are recovered in a small time interval $dt$. The second term $(1-\theta^t)\beta\theta^t a^*(\theta^t) dt$ is the population mass of normal UEs that are compromised in a small time interval $dt$. The following proposition characterizes the convergence of the system dynamics.

\begin{proposition}
There exists a critical effective infection rate $\tau_c = \frac{1}{a^{\text{AF}}}$ such that if $\tau < \tau_c$, then the system compromise state converges to $\theta^\infty = 0$; otherwise, $\theta^\infty = \theta^\dagger$ where $\theta^\dagger > 0$ is the unique solution of $(1-\theta^\dagger)a^*(\theta^\dagger) = 1/\tau$.
\end{proposition}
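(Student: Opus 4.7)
The plan is to analyze the one-dimensional ODE \eqref{systemdyn} by locating its equilibria and studying their stability via the sign of $d\theta/dt$. Factoring, the dynamics read $d\theta^t = \theta^t\bigl[(1-\theta^t)\beta a^*(\theta^t) - \delta\bigr]dt$, so $\theta = 0$ is always an equilibrium, and any positive equilibrium must satisfy $(1-\theta)a^*(\theta) = \delta/\beta = 1/\tau$. Define $g(\theta) \triangleq (1-\theta)a^*(\theta)$ on $[0,1]$. The bulk of the argument is to show that $g$ is continuous, starts at $g(0)=a^{\text{AF}}$, and strictly decreases to $0$ on $[0,\bar\theta]$ before remaining zero on $[\bar\theta,1]$, so that the equation $g(\theta)=1/\tau$ has a unique positive root exactly when $1/\tau < a^{\text{AF}}$, i.e.\ when $\tau>\tau_c$.

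First I would establish the properties of $a^*(\theta)$ needed for $g$. At $\theta=0$ the first-order condition from Proposition~1 reduces to $u'(a)=0$, whose unique solution in $(0,M)$ is $a^{\text{AF}}$, giving $g(0)=a^{\text{AF}}$. For $\theta\in(0,\bar\theta)$, Proposition~1 already gives that $a^*(\theta)$ is strictly decreasing in $\theta$; the implicit-function theorem applied to the defining relation \eqref{aSol} (whose left-hand side was shown strictly increasing in $a$ in \eqref{uMono}) yields continuity and differentiability of $a^*(\theta)$ on $(0,\bar\theta)$, with left-limit $0$ at $\bar\theta$. Since both factors $1-\theta$ and $a^*(\theta)$ are continuous and decreasing on $[0,\bar\theta]$, and at least one is strictly decreasing, $g$ is continuous and strictly decreasing on $[0,\bar\theta]$ with $g(\bar\theta)=0$; on $[\bar\theta,1]$, $g\equiv 0$.

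Next I would do the case analysis. If $\tau<\tau_c=1/a^{\text{AF}}$, then $1/\tau > a^{\text{AF}}=g(0)\ge g(\theta)$ for every $\theta\in[0,1]$, so $(1-\theta)\beta a^*(\theta)-\delta<0$ and therefore $d\theta^t/dt<0$ whenever $\theta^t>0$; standard comparison for the scalar ODE then forces $\theta^t\downarrow 0$ for any initial $\theta^0\in(0,1]$. If $\tau>\tau_c$, then $1/\tau<a^{\text{AF}}=g(0)$ and $g(\bar\theta)=0<1/\tau$, so by continuity and strict monotonicity there is a unique $\theta^\dagger\in(0,\bar\theta)$ with $g(\theta^\dagger)=1/\tau$. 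Monotonicity of $g$ gives $g(\theta)>1/\tau$ for $\theta\in(0,\theta^\dagger)$, hence $d\theta/dt>0$, and $g(\theta)<1/\tau$ for $\theta\in(\theta^\dagger,1]$, hence $d\theta/dt<0$. Therefore $\theta^\dagger$ is globally asymptotically stable on $(0,1]$, while $\theta=0$ is unstable, and the system compromise state converges to $\theta^\dagger$ from any $\theta^0>0$. The boundary case $\tau=\tau_c$ can be absorbed into the sub-critical case since then $g(\theta)\le g(0)=1/\tau$ with equality only at $\theta=0$, still yielding $d\theta/dt<0$ for $\theta>0$.

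The main obstacle I anticipate is not the stability argument, which is routine once $g$ has the stated shape, but rigorously verifying the strict monotonicity and continuity of $a^*(\theta)$ across the whole interval $[0,\bar\theta]$ and in particular handling the kink at $\bar\theta$ where $a^*$ transitions from the interior solution of \eqref{aSol0} to the boundary solution $a^*=0$. A careful limit argument using that the left-hand side of \eqref{aSol} is continuous and strictly increasing in $a$ (by \eqref{uMono}) while the right-hand side is continuous and strictly decreasing in $\theta$ is needed to conclude that $a^*(\theta)\to 0$ as $\theta\uparrow\bar\theta$, so that $g$ is continuous at $\bar\theta$ and the unique crossing $\theta^\dagger$ is well-defined.
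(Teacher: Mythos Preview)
Your proposal is correct and follows essentially the same approach as the paper: factor the dynamics as $\theta^t[(1-\theta^t)\beta a^*(\theta^t)-\delta]$, use $a^*(0)=a^{\text{AF}}$ together with the monotonicity of $a^*(\theta)$ from Proposition~1 to show the bracketed term is decreasing, and then do the same sign-based case split on $\beta a^{\text{AF}}-\delta$ to get either global convergence to $0$ or a unique attracting positive root. Your treatment is slightly more careful than the paper's about continuity at the kink $\bar\theta$ and about the borderline case $\tau=\tau_c$, but the underlying argument is identical.
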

\begin{proof}
The system dynamics can be rewritten as
\begin{align}
d \theta^t = \theta^t[(1-\theta^t)\beta a^*(\theta^t) -\delta] dt
\end{align}
We are interested in the sign of $d\theta^t$ for different values of $\theta^t$. Since $\theta^t \geq 0$, what matters is the sign of $f(\theta^t) \triangleq (1-\theta^t)\beta a^*(\theta^t) -\delta$. For $\theta^t \geq \bar{\theta}$, $a^*(\theta^t) = 0$ and hence, $f(\theta^t) = -\delta < 0$. For $\theta^t < \bar{\theta}$, $f(\theta^t)$ is decreasing in $\theta^t$ because $a^*(\theta^t)$ is decreasing in $\theta^t$ according to Proposition 1. Now consider the sign of $f(0) = \beta a^*(0) - \delta$. According to \eqref{aSol0}, $a^*(0)$ is the solution to $u'(a) = 0$, which is the same as the optimal participation level in the attack-free case. Specifically, $a^*(0) = a^{\text{AF}}$. If $\beta a^\text{AF} - \delta < 0$, then $f(\theta^t) < 0$ for all $\theta^t$. Therefore, the system converges to $\theta^\infty = 0$. If $\beta a^\text{AF} - \delta \geq 0$, then there exists a unique point $\theta^\dagger \in [0, \bar\theta)$ such that for $\theta^t > \theta^\dagger$, $f(\theta^t) < 0$ and for $\theta^t < \theta^\dagger$, $f(\theta^t) > 0$. This means that the system compromise state will converge to $\theta^t$. Moreover, $\theta^t$ is the solution of $(1-\theta)\beta a^*(\theta) -\delta = 0$.
\end{proof}

Proposition 4 shows that when UEs are selfish and strategically deciding their participation levels, the infectious attack propagation also has a thresholding effect with regard to the effective infection rate. However, this thresholding effect is significantly different in nature from when UEs are obeying prescribed participation actions. In the non-strategic case, the effective infection rate threshold is a function of the given participation level. In the strategic case, the threshold is a constant. In particular, the constant is exactly the critical threshold when UEs follow the individually optimal action $a^\text{AF}$ in the attack-free network (see Figure \ref{critical} for an illustration).

\begin{figure}
  \centering
  \includegraphics[width=2.5in]{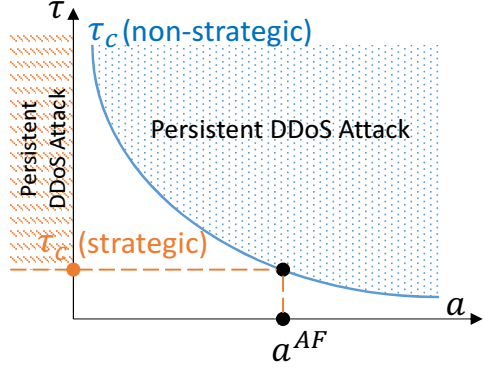}\\
  \caption{Critical effective infection rates in strategic and non-strategic cases.}\label{critical}
  \vspace{-10pt}
\end{figure}

Let us understand this thresholding effect better. As proved in Proposition 1, the individual optimal participation level under infectious attack risks is always lower than that in the attack-free network. Therefore, although the participation level is adapting over time depending on the system compromise state, it will never be higher than $a^\text{AF}$. As a result of Proposition 3, if the effective infection rate is less than $1/a^\text{AF}$, the attack will extinguish on its own. When the effective infection rate becomes sufficiently large (i.e. $\tau > 1/a^\text{AF}$), individual UEs always have incentives to choose a participation level greater than the threshold participation level that eliminates infection. This is because unilaterally increasing the participation level does not change the network-wide epidemic dynamics since each individual UE is infinitesimal in the continuum population model yet increases the benefit that the individual UE can obtain. Therefore infectious attacks persist.

Proposition 5 gives bounds on the convergence speed.
\begin{proposition}
If $\tau > \frac{1}{a^\text{AF}}$, then starting from any initial system compromise state $\theta^0 > \theta^\dagger$ (or $\theta^0 < \theta^\dagger)$, $\forall \epsilon > 0$, let $T(\theta^0, \epsilon)$ be the time at which $\theta^T$ decreases to $\theta^\dagger + \epsilon \triangleq \theta_\epsilon$ (or increases to $\theta^\dagger - \epsilon \triangleq \theta_\epsilon$), we have
\begin{align}
\frac{\ln \theta^0 -\ln \theta_\epsilon}{(1-\theta^0)\beta a^*(\theta^0) - \delta} < T(\theta^0, \epsilon) < \frac{\ln \theta^0 - \ln \theta_\epsilon}{(1-\theta_\epsilon)\beta a^*(\theta_\epsilon) - \delta}
\end{align}
\end{proposition}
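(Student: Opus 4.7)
The plan is to treat the autonomous ODE $\dot{\theta}^t = \theta^t\, g(\theta^t)$ with $g(\theta) \triangleq (1-\theta)\beta a^*(\theta) - \delta$ as a separable equation and then bound the resulting time-integral via monotonicity of $g$. Separating variables gives $dt = d\theta/[\theta\, g(\theta)]$, so
$$T(\theta^0, \epsilon) = \int_{\theta^0}^{\theta_\epsilon} \frac{d\theta}{\theta\, g(\theta)}.$$
Under the hypothesis $\tau > 1/a^{\text{AF}}$, the proof of Proposition 4 already established that $g$ is strictly decreasing (using the monotonicity of $a^*(\theta)$ from Proposition 1) and vanishes uniquely at $\theta^\dagger$. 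Since the trajectory is stopped $\epsilon$ away from $\theta^\dagger$, $g$ keeps one strict sign on the interval of integration and the integrand is bounded and continuous, so the integral is well-defined.

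The key step is to replace $g(\theta)$ inside the integrand by its extreme values on the interval, which by monotonicity are attained exactly at the endpoints $\theta^0$ and $\theta_\epsilon$. Because $g$ crosses zero at $\theta^\dagger$ and both endpoints lie on the same side of $\theta^\dagger$, the endpoint closer to $\theta^\dagger$ (namely $\theta_\epsilon$) carries the smaller magnitude $|g|$, while $\theta^0$ carries the larger magnitude. Replacing $|g(\theta)|$ by its minimum $|g(\theta_\epsilon)|$ inflates the integrand and yields an upper bound on $T$; replacing it by its maximum $|g(\theta^0)|$ deflates the integrand and yields a lower bound. After pulling the constant out of the integral and computing the elementary primitive $\int d\theta/\theta = \ln\theta$, the two bounds take the claimed form with $\ln\theta^0 - \ln\theta_\epsilon$ in the numerator and either $g(\theta^0)$ or $g(\theta_\epsilon)$ in the denominator.

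The main obstacle, and the only real subtlety, is sign bookkeeping. The two regimes $\theta^0 > \theta^\dagger$ and $\theta^0 < \theta^\dagger$ differ in both the direction of integration and the sign of $g$: above $\theta^\dagger$ the trajectory decreases with $g<0$, below it the trajectory increases with $g>0$. Combining these into a single two-sided inequality requires verifying that the numerator $\ln\theta^0 - \ln\theta_\epsilon$ and the denominator $g(\cdot)$ flip sign in lockstep across the two regimes, so that the ordering of the bounds is preserved. A secondary but benign issue is that for $\theta \geq \bar{\theta}$ we have $a^*(\theta) = 0$ and hence $g(\theta) = -\delta$; the monotonicity of $g$ and all bounds remain valid on this flat portion, so the argument extends seamlessly to initial states $\theta^0 \in (\bar\theta, 1]$.
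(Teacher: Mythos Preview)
Your approach is essentially identical to the paper's: the paper also passes to $d\ln\theta^t = [(1-\theta^t)\beta a^*(\theta^t)-\delta]\,dt$, invokes the monotonicity of $g(\theta)=(1-\theta)\beta a^*(\theta)-\delta$ (established in Proposition~4) to bound the rate by its endpoint values $g(\theta^0)$ and $g(\theta_\epsilon)$, and reads off the claimed inequalities. Your treatment of the sign bookkeeping and of the flat region $\theta\ge\bar\theta$ is in fact more explicit than what the paper provides.
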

\begin{proof}
The system dynamics can be written as
\begin{align}
d \ln \theta^t = \frac{d \theta^t}{\theta^t} = [(1-\theta^t)\beta a^*(\theta^t) - \delta]dt
\end{align}
Since $\ln \theta^t$ is monotonically increasing in $\theta \in (0, 1)$, $\theta^0$ evolving to $\theta_\epsilon$ is equivalent to $\ln \theta^0$ evolving to $\ln \theta_\epsilon$. Since $(1-\theta^t)\beta a^*(\theta^t) -\delta$ is decreasing in $\theta^t$, the rate of absolute change $|(1-\theta^t)\beta a^*(\theta^t) -\delta|$ is larger if $\theta^t$ is further away from $\theta^\dagger$. Therefore, before $\ln \theta^t$ decreases (or increases) to $\ln \theta_\epsilon$, the rate of change is at least $(1-\theta_\epsilon)\beta a^*(\theta_\epsilon) - \delta$ and at most $(1-\theta^0)\beta a^*(\theta^0) - \delta$. This proves the proposition.
\end{proof}

\subsection{Attack with Strategic UEs and Unobservable System Compromise State}
In practice, neither the opertaor nor UEs observe the system compromise state in real time. In this case, UEs have to conjecture how the other UEs will participate in the D2D offloading system and the resulting system compromise state. To handle this situation, we formulate a population game to understand the D2D participation incentives under the infectious attacks. To enable tractable analysis, we assume that there are $K$ types of UEs. This is not a strong assumption since we do not impose any restriction on the value of $K$. UEs of the same type $k$ have the same evaluation function $v_{(k)}(\cdot)$, service provision cost $c_{(k)}$ and recovery cost $d_{(k)}$. Denote the fraction of type $k$ UEs by $w_k$ and we must have $\sum_{k=1}^K w_k = 1$.

In the D2D offloading participation game, each UE is a player who decides its participation level. Since UEs do not observe the system compromise state, it is natural to assume the each UE adopts a constant strategy, namely $a^t_i = a_i, \forall t$. The Nash equilibrium of the D2D participation game is defined as follows.

\begin{definition} (Nash Equilibrium).
A participation action profile $\a^\text{NE}$ is a Nash equilibrium if for all $i$, $a_i^\text{NE} = \arg\max_{a_i} U(a_i, \theta^\infty(\a^\text{NE}))$ where $\theta^\infty(\a^\text{NE})$ is the converged system compromise state under $\a^\text{NE}$.
\end{definition}

First, we characterize the converged system compromise state assuming that all type $k$ UEs choose a fixed participation level $a_{(k)}$.

\begin{proposition}
Given the type-wise participation level vector $(a_{(1)},...,a_{(K)})$, there exists a critical effective infection rate $\tau_c = \frac{1}{\sum_{k=1}^K w_k a_{(k)}}$, such that if $\tau \leq \tau_c$, $\theta^\infty = 0$; otherwise, $\theta^\infty > 0$ is the unique solution of $\sum_{k=1}^K \frac{ \tau w_k a_{(k)}}{\tau \theta^\infty a_{(k)} + 1} = 1$.
\end{proposition}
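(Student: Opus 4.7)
The plan is to first reduce the heterogeneous SIS dynamics to a scalar fixed-point equation by exploiting the fact that every type $k$ UE is coupled to the rest of the population only through the aggregate compromise state $\theta^t$. Concretely, let $\theta_{(k)}^t$ denote the compromise probability of a type $k$ UE, so that the aggregate state is $\theta^t=\sum_{k=1}^K w_k\theta_{(k)}^t$. Since a type $k$ server is compromised at rate $\beta\theta^t a_{(k)}$ and recovers at rate $\delta$, the per-type dynamics read
\begin{equation}
d\theta_{(k)}^t = \bigl[-\delta\,\theta_{(k)}^t + \beta\theta^t a_{(k)}(1-\theta_{(k)}^t)\bigr]\,dt.
\end{equation}
Setting $d\theta_{(k)}^t=0$ and solving for $\theta_{(k)}^\infty$ in terms of the aggregate $\theta^\infty$ gives the explicit expression $\theta_{(k)}^\infty = \tau\theta^\infty a_{(k)}/(1+\tau\theta^\infty a_{(k)})$, which is the natural heterogeneous analog of what appears in Proposition~3.

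Next, I would substitute these per-type expressions into $\theta^\infty=\sum_k w_k\theta_{(k)}^\infty$. This yields either $\theta^\infty=0$ or, after canceling a common factor of $\theta^\infty$, the scalar equation $\sum_{k=1}^K \tau w_k a_{(k)}/(\tau\theta^\infty a_{(k)}+1)=1$ claimed in the proposition. Define
\begin{equation}
g(\theta) \;\triangleq\; \sum_{k=1}^K \frac{\tau w_k a_{(k)}}{1+\tau\theta a_{(k)}}.
\end{equation}
Because each summand is strictly decreasing in $\theta$ (the $a_{(k)}$ for which $w_k>0$ are strictly positive whenever $\tau_c$ is finite), $g$ is continuous and strictly decreasing on $[0,1]$ with $g(0)=\tau\sum_k w_k a_{(k)}$ and $g(1)<1$. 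Hence $g(\theta)=1$ has a unique solution in $(0,1)$ if and only if $g(0)>1$, i.e.\ $\tau>1/\sum_k w_k a_{(k)}=\tau_c$; when $\tau\leq\tau_c$, the only fixed point is $\theta^\infty=0$. This already delivers the critical threshold and the characterization of the positive root.

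Finally, I would argue convergence of the dynamical system to these fixed points. The clean way is to note that $\theta_{(k)}^t$, viewed as an ODE driven by the aggregate $\theta^t$, is contracting toward its ``instantaneous equilibrium'' $\tau\theta^t a_{(k)}/(1+\tau\theta^t a_{(k)})$, so on any time scale the per-type states track a function of the aggregate; then one derives a scalar ODE in $\theta^t$ whose sign is controlled by $g(\theta^t)-1$ and mimic the phase-line argument from Proposition~3 to conclude global convergence to $\theta^\infty=0$ when $\tau\leq\tau_c$ and to the unique positive root otherwise. The main obstacle I anticipate is making this last step rigorous: unlike the homogeneous case where the state is one-dimensional and the sign of $d\theta^t$ is immediate, here the reduction requires either a Lyapunov/monotone-systems argument or a careful time-scale separation to justify replacing $\theta_{(k)}^t$ by its quasi-steady-state expression; establishing monotone convergence of the coupled $K$-dimensional system is the one piece that does not follow by plug-and-chug from Proposition~3.
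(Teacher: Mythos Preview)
Your fixed-point argument is correct and essentially identical to the paper's: both derive the per-type steady-state relation $\theta_{(k)}^\infty=\tau\theta^\infty a_{(k)}/(1+\tau\theta^\infty a_{(k)})$, substitute into the aggregate $\theta^\infty=\sum_k w_k\theta_{(k)}^\infty$, cancel the common factor $\theta^\infty$, and then use strict monotonicity of $g(\theta)=\sum_k \tau w_k a_{(k)}/(1+\tau\theta a_{(k)})$ together with the boundary values $g(0)=\tau\sum_k w_k a_{(k)}$ and $g(1)<1$ to obtain the threshold and uniqueness.

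The one noteworthy difference is scope: the paper's proof stops at characterizing the steady-state equations and does \emph{not} address dynamical convergence of the $K$-dimensional ODE to these fixed points. Your final paragraph, where you flag the need for a Lyapunov or monotone-systems argument to justify the quasi-steady-state reduction, is therefore not a gap in your reproduction of the paper's proof but rather an observation that the paper itself leaves convergence unproved. If you want to match the paper exactly, you can drop that discussion; if you want a genuinely stronger result, the monotone-systems route you sketch (the vector field is cooperative in the $\theta_{(k)}$ variables) is the natural way to close it.
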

\begin{proof}
Let $\theta_{(k)}$ be the fraction of compromised UEs among all type $k$ UEs. In the steady state, we have the following relation
\begin{align}\label{ss1}
\theta^{\infty}_{(k)} = \frac{\delta^{-1}}{\delta^{-1} + (\theta^\infty\beta a_{(k)})^{-1}} = \frac{\tau\theta^{\infty} a_{(k)}}{\tau\theta^{\infty} a_{(k)} + 1}
\end{align}
where the fraction of the compromised UEs among all UEs is $\theta^\infty = \sum_k w_k \theta^\infty_{(k)}$. It is clear from the above equation that if $\theta^\infty = 0$, then $\theta^\infty_{(k)} = 0, \forall k$. Hence, $\theta^\infty = 0$ is a trivial solution in which $a_{(k)}, \forall k$ can be any value. We now study the non-trivial solution $\theta^\infty > 0$. Rearranging the above equation, we have $\theta^\infty_{(k)} = (1-\theta^\infty_{(k)})\theta^\infty \tau a_{(k)}$. Summing up over $k$ and multiplying by $w_k$ on both sides, we have
\begin{align}\label{fixed}
\theta^\infty =\sum_{k=1}^K w_k \theta^\infty_{(k)} = \tau \theta^\infty \sum_{k=1}^K w_k(1-\theta^\infty_{(k)}) a_{(k)}
\end{align}
This leads to
\begin{align}\label{ss2}
\tau \sum_{k=1}^K w_k(1-\theta^\infty_{(k)}) a_{(k)} = 1
\end{align}
If $\tau \leq \tau_c = \frac{1}{\sum_{k=1}^K w_k a_{(k)}}$, then clearly there is no non-trivial solution of $\theta^\infty_{(k)}$ of the above equation. This implies that the only solution is $\theta^\infty = \theta^\infty_{(k)} = 0, \forall k$, which proves the first half of this proposition. Next, we show that if $\tau > \tau_c$, there indeed exists a unique solution $\theta^\infty > 0$. Substituting \eqref{ss1} into \eqref{ss2} yields
\begin{align}
 \sum_{k=1}^K \frac{\tau w_k a_{(k)}}{\tau \theta^\infty a_{(k)} + 1} = 1
\end{align}
Clearly the left-hand side of the above equation $\text{LHS}(\theta^\infty)$ is decreasing in $\theta^\infty$. Moreover $\text{LHS}(0) = \tau \sum_{k=1}^K w_k a_{(k)} > 1$, and $\text{LHS}(1) = \tau \sum_{k=1}^K \frac{w_k a_{(k)}}{\tau a_{(k)} + 1} < \sum_{k=1}^K w_k = 1$. Therefore, there is a unique solution of $\theta^\infty \in (0, 1)$.
\end{proof}
Proposition 6 is actually the extended version of Proposition 3, which further considers heterogeneous UEs. In the homogeneous UE case, the critical infection rate depends on the homogeneous participation level of UEs. In the heterogeneous UE case, the critical infection rate depends on the average participation level of UEs. With Proposition 6 in hand, we are able to characterize the Nash equilibrium of the D2D participation game.

\begin{theorem}
(1) The D2D participation game has a unique NE. (2) The NE is symmetric within each type, namely $a^\text{NE}_i = a^\text{NE}_{(k)}$ for all UE $i$ with type $k$. (3) If $\tau \leq \frac{1}{\sum_{k=1}^K w_k a^\text{AF}_{(k)}}$, then $\theta^\infty = 0$ and $a^\text{NE}_{(k)} = a^\text{AF}_{(k)}, \forall k$. Otherwise, $\theta^\infty > 0$ and $\sum_{k=1}^K w_k a^\text{NE}_{(k)} > \tau^{-1}$.
\end{theorem}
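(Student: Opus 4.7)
The plan is to reduce the Nash-equilibrium characterization to a one-dimensional consistency equation in the aggregate compromise state $\theta^\infty$ and then establish uniqueness by a monotonicity argument. For part (2), observe from Proposition 1 that the best response of any UE $i$ depends only on its type (through $v_{(k)}, c_{(k)}, q_{(k)}$) and on the steady-state compromise state $\theta^\infty$, which its own infinitesimal action cannot affect. Hence in any NE all type-$k$ UEs must choose the same action $a^\text{NE}_{(k)} := a^*_{(k)}(\theta^\text{NE})$, giving symmetry within each type.

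Given this symmetry, an NE reduces to a pair $(\theta^\text{NE}, \{a^\text{NE}_{(k)}\})$ with $a^\text{NE}_{(k)} = a^*_{(k)}(\theta^\text{NE})$ and with $\theta^\text{NE}$ being the steady-state compromise state produced by those actions, as computed in Proposition 6. Proposition 6 splits the consistency requirement into two cases: (a) $\theta^\text{NE}=0$, consistent iff $\tau \sum_k w_k a^*_{(k)}(0) \le 1$, equivalently $\tau \le 1/\sum_k w_k a^\text{AF}_{(k)}$ since $a^*_{(k)}(0)=a^\text{AF}_{(k)}$; and (b) $\theta^\text{NE}>0$ satisfying $g(\theta^\text{NE})=1$, where
\begin{equation*}
g(\theta) \;:=\; \sum_{k=1}^K \frac{\tau w_k\, a^*_{(k)}(\theta)}{\tau\theta\, a^*_{(k)}(\theta)+1}.
\end{equation*}

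The main technical step is to show that $g$ is continuous and strictly decreasing on the interval where at least one $a^*_{(k)}(\theta)>0$, and vanishes thereafter. Writing the $k$-th summand as $\phi_k(\theta)$ and differentiating with $a=a^*_{(k)}(\theta)$, $a'=(a^*_{(k)})'(\theta)$, a short calculation gives
\begin{equation*}
\phi_k'(\theta) \;=\; \frac{\tau a' - \tau^2 a^2}{(\tau\theta a + 1)^2},
\end{equation*}
which is strictly negative whenever $a>0$ because Proposition 1 guarantees $a'<0$ strictly in that regime. For $\theta \ge \max_k \bar\theta_{(k)}$ every $a^*_{(k)}(\theta)=0$ so $g\equiv 0$. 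Thus $g$ is strictly decreasing on $[0,\max_k \bar\theta_{(k)})$ with $g(0)=\tau\sum_k w_k a^\text{AF}_{(k)}$ and $g \to 0$ at the right end.

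Uniqueness and part (3) now follow by casework. If $\tau \le 1/\sum_k w_k a^\text{AF}_{(k)}$ then $g(0)\le 1$; case (a) applies and is the only candidate, giving $\theta^\text{NE}=0$ and $a^\text{NE}_{(k)}=a^\text{AF}_{(k)}$. If instead $\tau > 1/\sum_k w_k a^\text{AF}_{(k)}$, case (a) becomes inconsistent (playing $a^\text{AF}_{(k)}$ would force $\theta^\infty>0$ by Proposition 6), and strict monotonicity together with $g(0)>1>g(\max_k\bar\theta_{(k)})$ yields a unique positive root $\theta^\text{NE}$. Finally, bounding each denominator in $g(\theta^\text{NE})=1$ from below by $1$ gives $\tau \sum_k w_k a^\text{NE}_{(k)} > 1$, the claimed inequality. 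The principal obstacle is the strict monotonicity of $g$, which in turn hinges on the strict monotonicity of the individual best responses $a^*_{(k)}$ from Proposition 1; a secondary subtlety is the careful handling of the boundary $\theta=0$ and of the saturation region $\theta \ge \bar\theta_{(k)}$ so that the two cases from Proposition 6 are mutually exclusive.
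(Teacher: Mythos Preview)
Your proposal is correct and follows essentially the same route as the paper: reduce the NE to a fixed point in $\theta^\infty$, define the aggregate function $g(\theta)=\sum_k \tau w_k a^*_{(k)}(\theta)/(\tau\theta a^*_{(k)}(\theta)+1)$ (the paper calls it $f$), and establish uniqueness via the derivative computation $\phi_k'(\theta)\propto a'-\tau a^2<0$, which is exactly the paper's calculation. Your treatment is slightly more careful than the paper's in two minor respects: you explicitly handle the saturation region $\theta\ge\bar\theta_{(k)}$ and the mutual exclusivity of the $\theta^\infty=0$ and $\theta^\infty>0$ cases, and you supply the one-line bound for $\sum_k w_k a^{\text{NE}}_{(k)}>\tau^{-1}$, which the paper states but does not argue; conversely, note that your displayed $\phi_k'$ drops the harmless positive factor $w_k$.
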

\begin{proof}
Consider type $k$ UEs. Each UE chooses the individual optimal participation level determined by the following equation
\begin{align}
u'_{(k)}(a_i)(\rho + \delta \beta\theta^\infty a_i) - u_{(k)}(a_i)\beta \theta^\infty - \beta\theta^\infty q_{(k)} = 0
\end{align}
Given the same $\beta,\delta, \theta^\infty$, there is a unique optimal solution $a^*_i$ according to Proposition 1. Therefore, if an equilibrium exists, UEs of the same type must choose the same participation level. To prove the existence of NE is to prove that the following function has a fixed point $\theta^\infty$ based on our analysis in the proof of Proposition 6:
\begin{align}\label{fixed2}
\theta^\infty = \tau \theta^\infty \sum_{k=1}^K w_k(1-\theta^\infty_{(k)}) a_{(k)}(\theta^\infty)
\end{align}
Note that the difference from \eqref{fixed} is that $a_{(k)}(\theta^\infty)$ is a function of $\theta^\infty$ rather than a prescribed action.

First, we investigate if $\theta^\infty = 0$ could be a fixed point. If $\theta^\infty = 0$, then $a^*_{(k)} = a^\text{AF}_{(k)}, \forall k$, which is the optimal participation level in the attack-free network. Therefore, if $\tau \leq \frac{1}{\sum_{k=1}^K w_k a^\text{AF}_{(k)}}$, then $\theta^\infty = 0$ is a fixed point. Otherwise, $\theta^\infty = 0$ is not a fixed point.

Next, we investigate if there is any $\theta^\infty > 0$ that can be fixed point. This is to show, according to \eqref{fixed2}, if there is a solution to
\begin{align}
\sum_{k=1}^K \frac{ w_k \tau  a_{(k)}(\theta^\infty) }{\tau \theta^\infty a_{(k)}(\theta^\infty) + 1} = 1
\end{align}
Denote the left-hand side function by $f(\theta^\infty)$.
\begin{align}
f'(\theta^\infty) = \sum_{k=1}^K w_k \tau\frac{a'_{(k)}(\theta^\infty) - \tau a^2_{(k)}(\theta^\infty) }{(\tau \theta^\infty a_{(k)}^*(\theta^\infty) + 1)^2} < 0
\end{align}
The inequality is because $a_{(k)}(\theta^\infty)$ decreases with $\theta^\infty$ according to Proposition 1. Moreover, $f(1) < \sum_{k=1}^K w_k = 1$ and $f(0) = \tau \sum_{k=1}^K w_k  a^\text{AF}_{(k)}$. Therefore, if $\tau > \frac{1}{\sum_{k=1}^K w_k  a^\text{AF}_{(k)}}$, then $f(0) > 1$. This means that there exists a unique positive solution $\theta^\infty$.
\end{proof}

\begin{corollary}
If UEs are homogeneous, i.e. $K = 1$, then the D2D computing participation game has a unique symmetric NE. Moreover, if $\tau \leq \frac{1}{a^\text{AF}}$, then $\theta^\infty = 0$ and $a^{NE} = a^{AF}$; otherwise, $\theta^\infty = \theta^\dagger$ where $\theta^\dagger$ has the same value given in Proposition 4.
\end{corollary}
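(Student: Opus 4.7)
The plan is to derive the corollary as a specialization of Theorem 1 to the case $K=1$, and then to verify algebraically that the fixed point characterizing the Nash equilibrium coincides with the value $\theta^\dagger$ from Proposition 4. The key observation is that Proposition 1's individual best response $a^*(\theta)$ is the same function that appears both in the observable-state dynamics of Proposition 4 and in the fixed-point equation driving Theorem 1, so the two characterizations must agree.

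First, I would invoke Theorem 1 with $K=1$. Since $w_1 = 1$, the threshold $\frac{1}{\sum_{k=1}^K w_k a^\text{AF}_{(k)}}$ collapses to $\frac{1}{a^\text{AF}}$, and the ``symmetric within each type'' clause becomes full symmetry across all UEs, yielding a unique symmetric NE action $a^\text{NE}$. For the easy regime $\tau \leq 1/a^\text{AF}$, Theorem 1 immediately delivers $\theta^\infty = 0$ and $a^\text{NE} = a^\text{AF}$, so nothing further is required in this case.

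Next, for the non-trivial regime $\tau > 1/a^\text{AF}$, I would substitute $K=1$ into the equilibrium condition derived in the proof of Theorem 1, namely $\sum_{k=1}^K \frac{w_k \tau\, a_{(k)}(\theta^\infty)}{\tau \theta^\infty a_{(k)}(\theta^\infty) + 1} = 1$, to obtain $\frac{\tau\, a^*(\theta^\infty)}{\tau \theta^\infty a^*(\theta^\infty) + 1} = 1$. A one-line rearrangement yields $(1-\theta^\infty)\, a^*(\theta^\infty) = 1/\tau$, which is precisely the equation that defines $\theta^\dagger$ in Proposition 4. Monotonicity of $a^*(\cdot)$ from Proposition 1 then guarantees that the positive solution is unique, so $\theta^\infty = \theta^\dagger$ and $a^\text{NE} = a^*(\theta^\dagger)$.

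The main obstacle is conceptual rather than computational: I must justify that the best-response map $\theta \mapsto a^*(\theta)$ is identical in the observable-state setting of Proposition 4 and in the conjecture-based setting of Theorem 1. This is because Proposition 1 derives $a^*(\theta)$ purely from the first-order condition of the foresighted utility $U(a_i,\theta)$ for any fixed $\theta$, independently of whether $\theta$ is observed in real time or conjectured in equilibrium. Hence the equilibrium condition pins down the same $\theta$ value as the attractor of the compromise dynamics in Proposition 4, and the corollary follows without any additional work beyond the two substitutions above.
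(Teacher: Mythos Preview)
Your proposal is correct and matches the paper's intended derivation: the paper offers no separate proof for this corollary, treating it as an immediate specialization of Theorem 1 to $K=1$, and your explicit verification that the single-type fixed-point equation $\frac{\tau a^*(\theta^\infty)}{\tau \theta^\infty a^*(\theta^\infty)+1}=1$ rearranges to $(1-\theta^\infty)a^*(\theta^\infty)=1/\tau$ is exactly the link to Proposition 4 that the corollary asserts. Your remark that $a^*(\theta)$ is the same best-response map in both the observable and unobservable settings (since it comes purely from Proposition 1's first-order condition) is the right justification and is implicit in the paper's treatment.
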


Theorem 1 and Corollary 1 show that, even if UEs do not observe $\theta^t$ in real-time, the system will converge to the same state when $\theta^t$ is observed as established in Proposition 4. The latter case indeed can be considered as that UEs are playing best response dynamics of the population game, which converges to the predicted NE. Moreover, the thresholding effect still exhibits. If the effective infection rate is sufficiently small, then the infectious attacks extinguish. Otherwise, the infectious attacks persist.

\section{Optimal Reward Mechanism Design}
We are now ready to design the optimal reward mechanism. Under the assumption that there are $K$ types of UEs, the operator's problem can be written as
\begin{align}\label{opt_virus}
\max_{r_0} ~(b_0 - r_0)\sum_{k=1}^K w_k (1-\theta_{(k)}^\infty) a^*_{(k)}(\theta^\infty)
\end{align}
Note that $\theta^\infty$, $\theta_{(k)}^\infty$ and $a^*_{(k)}$ all depend on the reward mechanism $r_0$ even though the dependency is not explicitly expressed. Solving the above optimization problem is difficult because there are no closed-form solutions of $\theta^\infty$, $\theta_{(k)}^\infty$ and $a^*_{(k)}$ in terms of $r_0$ since they are complexly coupled as shown in the previous sections. Fortunately, our analysis shows that there is a structural property that we can exploit to solve this problem in a much easier way.


\begin{theorem}
The optimal reward mechanism design problem $\eqref{opt_virus}$ under infectious attacks is equivalent to the following constrained reward design problem in the attack-free network
\begin{align}\label{opt_r}
\max_{r_0} & ~~~(b_0 - r_0)\sum_{k=1}^K w_k a^\text{AF}_{(k)}(r_0)\\
\textit{s.t.} & ~~~ \sum_{k=1}^K w_k a^\text{AF}_{(k)}(r_0) \leq \tau^{-1}
\end{align}
\end{theorem}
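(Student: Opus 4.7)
The plan is to partition the reward axis into two regimes according to whether the induced steady state is infection free or persistently infected, compute the operator's objective in each regime, and then show that the persistent regime is always dominated by the boundary point of the free regime.

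First, I would exploit the fixed point equation derived in the proof of Proposition~6 and Theorem~1. In the non-trivial steady state the type-wise compromise fractions satisfy \eqref{ss2}, which I would rewrite as
\begin{equation*}
\sum_{k=1}^K w_k(1-\theta^\infty_{(k)}) a^*_{(k)}(\theta^\infty) = \tau^{-1}.
\end{equation*}
This identity is the crux: whenever the reward $r_0$ drives the system into the persistent infection regime, the effective participation level appearing inside the operator's objective collapses to the constant $\tau^{-1}$, independent of how large $r_0$ becomes. In that regime the operator's utility is simply $(b_0-r_0)\tau^{-1}$, which is strictly decreasing in $r_0$.

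Next, I would invoke Theorem~1 to characterise when each regime occurs. If $\sum_k w_k a^\text{AF}_{(k)}(r_0) \leq \tau^{-1}$, then $\theta^\infty=0$ and $a^*_{(k)}(0) = a^\text{AF}_{(k)}(r_0)$, so the objective becomes $(b_0-r_0)\sum_k w_k a^\text{AF}_{(k)}(r_0)$; otherwise, the constant $\tau^{-1}$ applies by the previous paragraph. Because $a^\text{AF}_{(k)}(r_0)$ is continuous and monotone in $r_0$ by Assumption~2, the transition between the two regimes occurs exactly at the reward $\tilde r_0$ defined by $\sum_k w_k a^\text{AF}_{(k)}(\tilde r_0) = \tau^{-1}$, and the objective is continuous across this boundary.

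Finally, I would argue that any $r_0 > \tilde r_0$ is strictly dominated by $\tilde r_0$: at $\tilde r_0$ the utility equals $(b_0-\tilde r_0)\tau^{-1}$, while at any $r_0>\tilde r_0$ it equals $(b_0 - r_0)\tau^{-1}<(b_0-\tilde r_0)\tau^{-1}$. Hence the optimum must lie in the feasible set $\{r_0:\sum_k w_k a^\text{AF}_{(k)}(r_0)\leq \tau^{-1}\}$, on which the original objective coincides with the attack-free objective $(b_0-r_0)\sum_k w_k a^\text{AF}_{(k)}(r_0)$. This yields the equivalence claimed in \eqref{opt_r}.

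The main obstacle I anticipate is ensuring the identity $\sum_k w_k(1-\theta^\infty_{(k)})a^*_{(k)}(\theta^\infty)=\tau^{-1}$ is correctly extracted from the equilibrium characterisation; once this invariant is in hand, the remainder is a direct comparison of a strictly decreasing function with the value of a continuous objective at the regime boundary. A minor subtlety is the edge case in which the unconstrained attack-free optimum already satisfies the constraint, in which case the constraint is inactive and the two problems trivially agree; this should be handled by observing that the constraint set is closed and the objective is continuous, so an optimiser of the constrained problem exists in either case.
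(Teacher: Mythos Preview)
Your proposal is correct and follows essentially the same approach as the paper: partition $r_0$ into the infection-free regime and the persistent-infection regime, use the steady-state identity \eqref{ss2} (applied at the equilibrium actions via Theorem~1) to collapse the effective participation level to the constant $\tau^{-1}$ in the persistent regime, and then observe that the resulting utility $(b_0-r_0)\tau^{-1}$ is dominated by its value at the boundary $\tilde r_0$. Your treatment is in fact slightly more explicit than the paper's about the continuity at the boundary and the monotonicity of $a^\text{AF}_{(k)}(r_0)$ that guarantees the regime split occurs at a single threshold $\tilde r_0$.
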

\begin{proof}
We divide the reward mechanism $r_0$ into two categories $\mathcal{R}_1$ and $\mathcal{R}_2$. Consider any reward mechanism $r_0$, if the resulting $\sum_{k=1}^K a^*_{(k)}(\theta^\infty) \geq \tau^{-1}$, then $r_0 \in \mathcal{R}_1$. Otherwise $r_0 \in \mathcal{R}_2$.

Now, according to Theorem 1, if $r_0 \in \mathcal{R}_1$, then we also have $\sum_{k=1}^K w_{(k)}(1-\theta^\infty) a^*_{(k)}(\theta^\infty) = \tau^{-1}$, which is a constant that does not depend on the exact value of $r_0$. Therefore, the optimal $r_0$ in $\mathcal{R}_1$ must be the smallest possible $r_0$ in order to maximize the operator's utility. The smallest $r_0$ is the one such that $\sum_{k=1}^K w_k a^*_{(k)}(\theta^\infty) = \tau^{-1}$ and $\theta^\infty = 0$. Since $\theta^\infty = 0$, $\sum_{k=1}^K w_k a^*_{(k)}(\theta^\infty) = \tau^{-1}$ is equivalent to $\sum_{k=1}^K w_k a^\text{AF}_{(k)} = \tau^{-1}$. This means that if $r_0 \in \mathcal{R}_1$ is the optimal solution, it is also a feasible solution of the above constrained optimization problem.

If $r_0 \in \mathcal{R}_2$, then according to Theorem 1, we have $\theta^\infty = 0$. Again, since $\theta^\infty = 0$,  $\sum_{k=1}^K a^*_{(k)}(\theta^\infty) < \tau^{-1}$ is equivalent to $\sum_{k=1}^K w_k a^\text{AF}_{(k)} < \tau^{-1}$. This also proves that if $r_0 \in \mathcal{R}_2$ is the optimal solution, it is also a feasible solution of the above constrained optimization problem.
\end{proof}
Theorem 2 converts the reward optimization problem in the presence of infectious attack risks into an optimization problem in the attack-free network by simply adding a constraint. Because $a^\text{AF}_{(k)}(r_0)$ can be easily computed, the converted optimization problem can be easily solved through numerical methods. In fact, since $a^\text{AF}_{(k)}(r_0)$ is increasing in $r_0$, the search space of the optimal $r_0$ can be restricted to $[0, \bar{r}_0]$ where $\bar{r}_0$ makes the constraint binding, i.e. $\sum_{k=1}^K w_k a^\text{AF}_{(k)}(\bar{r}_0) = \tau^{-1}$.

\begin{figure}
  \centering
  \includegraphics[width=2.8in]{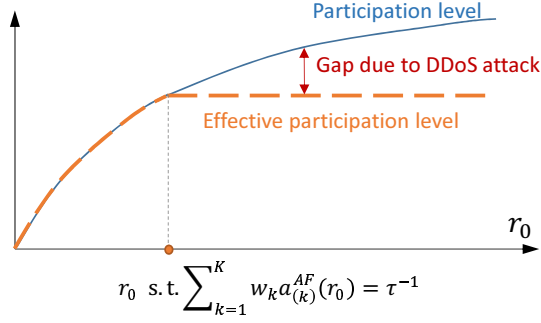}\\
  \caption{Gap between participation level and effective participation level caused by attacks.}\label{gap}
  \vspace{-10pt}
\end{figure}

The intuition of Theorem 2 is that the optimal reward mechanism must not promote too much participation that induce persistent attacks in the network. This is because too much participation does not improve the \textit{effective} participation level due to UEs compromised by the attack (see an illustration in Figure \ref{gap}). Since less participation requires a smaller unit reward $r_0$, more utility can be obtained for the operator by employing a smaller unit reward. We call this the ``less is more'' phenomenon in the D2D offloading network under infectious attack risks. Corollary 2 further compares the optimal reward mechanisms with and without the infectious attack risks.

\begin{corollary} The optimal reward mechanism $r^*$ for D2D offloading under infectious attack risks is no more than the optimal reward mechanism $r^\text{AF*}$ in the attack-free network.
\end{corollary}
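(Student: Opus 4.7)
The plan is to argue directly from the reformulation in Theorem 2, which converts the security-aware reward optimization into a constrained version of the attack-free reward optimization with \emph{identical} objective function $(b_0 - r_0)\sum_{k=1}^K w_k a^\text{AF}_{(k)}(r_0)$. The constraint $\sum_{k=1}^K w_k a^\text{AF}_{(k)}(r_0) \leq \tau^{-1}$ is the only difference between the two problems, and the entire corollary is essentially a statement about how adding this constraint can only push the optimizer downward, given the monotonicity of $a^\text{AF}_{(k)}(r_0)$.

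Concretely, I would split the argument into two cases based on whether the attack-free optimizer $r^\text{AF*}$ is feasible for the constrained problem. If $\sum_{k=1}^K w_k a^\text{AF}_{(k)}(r^\text{AF*}) \leq \tau^{-1}$, then $r^\text{AF*}$ itself is feasible for \eqref{opt_r}, and since the two objectives coincide, the constrained optimum $r^*$ equals $r^\text{AF*}$, so $r^* \leq r^\text{AF*}$ trivially. Otherwise, $\sum_{k=1}^K w_k a^\text{AF}_{(k)}(r^\text{AF*}) > \tau^{-1}$. By Assumption 2, each $a^\text{AF}_{(k)}(r_0)$ is increasing in $r_0$, so the sum $\sum_{k=1}^K w_k a^\text{AF}_{(k)}(r_0)$ is increasing in $r_0$ as well. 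Therefore the feasible set of \eqref{opt_r} is an interval $[0,\bar r_0]$ with $\bar r_0$ defined by the binding condition $\sum_{k=1}^K w_k a^\text{AF}_{(k)}(\bar r_0) = \tau^{-1}$, and monotonicity forces $\bar r_0 < r^\text{AF*}$. Since the constrained optimum must lie in $[0,\bar r_0]$, we immediately get $r^* \leq \bar r_0 < r^\text{AF*}$.

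The two cases combine to give $r^* \leq r^\text{AF*}$, completing the proof. There is no substantive obstacle here: the only ingredients needed are Theorem 2 (to translate the attacked problem into a constrained attack-free problem) and Assumption 2 (to show the constraint cuts off a high-reward interval). I would keep the write-up short and explicitly invoke both results by name, since the content of the corollary really is a direct consequence of the structural equivalence already established. The only subtlety worth flagging is that uniqueness of $r^\text{AF*}$ is not required — the argument goes through for any optimizer of the unconstrained problem.
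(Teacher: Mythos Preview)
Your proposal is correct and follows essentially the same route as the paper's proof: both perform a case split on whether $r^\text{AF*}$ satisfies the constraint of Theorem~2 (the paper phrases this via the categories $\mathcal{R}_1,\mathcal{R}_2$ introduced in the proof of Theorem~2), concluding $r^* = r^\text{AF*}$ in the feasible case and $r^* < r^\text{AF*}$ otherwise. Your write-up is slightly more explicit in invoking Assumption~2 to pin down the feasible interval $[0,\bar r_0]$, but the argument is the same.
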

\begin{proof}
This is a direct result of Theorem 2. If $r^\text{AF*} \in \mathcal{R}_2$, then $r^* = r^\text{AF*}$. If $r^\text{AF*} \in \mathcal{R}_1$, then $r^* < r^\text{AF*}$.
\end{proof}

Theorem 2 also implies that a larger effective infection rate reduces the operator's utility.

\begin{corollary}
The optimal utility of the operator is non-increasing in $\tau$.
\end{corollary}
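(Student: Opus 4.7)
The plan is to invoke Theorem 2 and argue via a feasible-set comparison. By Theorem 2, the operator's optimal value equals
\begin{equation*}
V(\tau) \;=\; \max_{r_0 \in \mathcal{F}(\tau)} \; (b_0 - r_0)\sum_{k=1}^K w_k a^{\text{AF}}_{(k)}(r_0),
\qquad
\mathcal{F}(\tau) \;\triangleq\; \Bigl\{\, r_0 \ge 0 \,:\, \sum_{k=1}^K w_k a^{\text{AF}}_{(k)}(r_0) \le \tau^{-1}\Bigr\}.
\end{equation*}
Note that the objective itself does not depend on $\tau$, so all the $\tau$-dependence lives in the feasible set $\mathcal{F}(\tau)$.

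Next I would show that $\mathcal{F}(\tau)$ is monotonically shrinking in $\tau$, i.e.\ $\tau_1 \le \tau_2$ implies $\mathcal{F}(\tau_2) \subseteq \mathcal{F}(\tau_1)$. This is essentially immediate: if $\tau_1 \le \tau_2$, then $\tau_1^{-1} \ge \tau_2^{-1}$, so any $r_0$ satisfying $\sum_k w_k a^{\text{AF}}_{(k)}(r_0) \le \tau_2^{-1}$ automatically satisfies $\sum_k w_k a^{\text{AF}}_{(k)}(r_0) \le \tau_1^{-1}$. (Assumption 2, which makes $a^{\text{AF}}_{(k)}$ monotone in $r_0$, is not needed for this inclusion; it only helps to characterize $\mathcal{F}(\tau)$ as an interval $[0,\bar r_0(\tau)]$ with $\bar r_0(\tau)$ decreasing in $\tau$, which provides additional intuition.)

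From the set inclusion, taking a supremum of a fixed function over a smaller set yields a value no larger than that over the larger set:
\begin{equation*}
V(\tau_2) \;=\; \max_{r_0 \in \mathcal{F}(\tau_2)} (b_0 - r_0)\sum_{k=1}^K w_k a^{\text{AF}}_{(k)}(r_0) \;\le\; \max_{r_0 \in \mathcal{F}(\tau_1)} (b_0 - r_0)\sum_{k=1}^K w_k a^{\text{AF}}_{(k)}(r_0) \;=\; V(\tau_1),
\end{equation*}
which is exactly the claim that $V(\tau)$ is non-increasing in $\tau$.

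I do not expect any real obstacle here: the whole content of the corollary is packaged into Theorem 2, and once the reformulation is in hand the argument is just monotonicity of a supremum under set inclusion. The only subtlety worth a one-line remark is that the reformulation requires $\mathcal{F}(\tau)$ to be nonempty; since $r_0 = 0$ gives $a^{\text{AF}}_{(k)}(0) = 0$ (every UE chooses to opt out when there is no reward and a positive cost), the constraint is trivially satisfied at $r_0 = 0$, so $\mathcal{F}(\tau) \ne \emptyset$ for every $\tau > 0$ and the maxima above are well-defined.
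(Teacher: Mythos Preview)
Your proposal is correct and follows essentially the same approach as the paper: invoke Theorem 2 and observe that increasing $\tau$ tightens the constraint, hence cannot increase the optimum. The paper's own proof is the one-line version of your argument (``a larger $\tau$ imposes a more stringent constraint''), whereas you have made the feasible-set inclusion and the resulting supremum monotonicity explicit, and added a nonemptiness check; there is no substantive difference in strategy.
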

\begin{proof}
This is a direct result of Theorem 2 since a larger $\tau$ imposes a more stringent constraint in the optimization problem \eqref{opt_r}.
\end{proof}

Corollary 3 implies that the operator's utility can be improved by reducing the effective infection rate $\tau$. This can be done by developing and deploying better security technologies that either reduce the attack success rate $\beta$ or improve the recovery rate $\delta$. Hence, the operator may consider jointly optimize the reward mechanism $r_0$ and security technology that results in a smaller $\tau$. This is to solve the following optimization problem,
\begin{align}\label{opt_r_tau}
\max_{r_0, \tau} & ~~~(b_0 - r_0)\sum_{k=1}^K w_k a^\text{AF}_{(k)}(r_0) - J(\tau)\\
\textit{s.t.} & ~~~ \sum_{k=1}^K w_k a^\text{AF}_{(k)}(r_0) \leq \tau^{-1}
\end{align}
where $J(\tau)$ is the technology development cost that achieves an effective infective rate $\tau$. Typically, $J(\tau)$ is decreasing in $\tau$. It is evident that the optimal solution must have $\sum_{k=1}^K w_k a^\text{AF}_{(k)}(r^*_0) = (\tau^*)^{-1}$. This is because if $\sum_{k=1}^K w_k a^\text{AF}_{(k)}(r^*_0) < (\tau^*)^{-1}$, then we can always choose a larger $\tilde{\tau} > \tau^*$ such that $J(\tilde{\tau}) < J(\tau^*)$. This leads to a higher utility which contradicts the optimality of $\tau^*$. Therefore, the joint optimization problem reduces to an unconstrained problem as follows
\begin{align}
\max_{r_0} & ~~~(b_0 - r_0)\sum_{k=1}^K w_k a^\text{AF}_{(k)}(r_0) - J(\frac{1}{\sum_{k=1}^K w_k a^\text{AF}_{(k)}(r_0)})
\end{align}
This problem can be easily solved using numerical methods.

\section{Simulations}
\subsection{Simulation Setup and Time System Conversion}
Since our analytical model adopts a continuous time system, we divide time into small time slots to enable the simulation. Specifically, a unit time in the continuous time system is divided into 100 slots and we simulate a large number of slots. We simulate a number $N = 100$ of mobile UEs moving in a square area of size $100\times 100$.  User mobility follows the random waypoint model. Specifically, when the UE is moving, it moves at a random speed between 0 and $v_\text{max}$ per slot towards a randomly selected destination. When the UE reaches the destination, it pauses for a random number of slots between 0 and $m_\text{max}$ and then selects a new destination. The parameters $v_\text{max}$ and $m_\text{max}$ control the mobility level of the network and will be varied to study the sensitivity of the random server-requester matching approximation to the actual UE mobility and server-requester matching.

\begin{figure}
  \centering
  \includegraphics[width=3in]{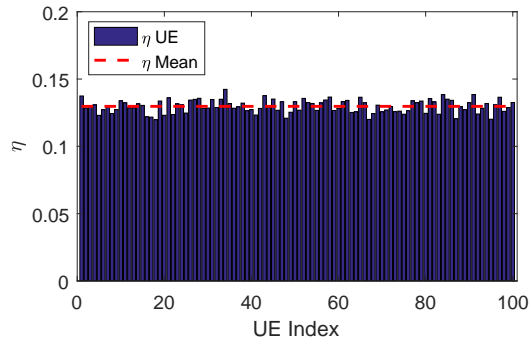}\\
  \vspace{-10pt}
  \caption{Estimation of offloading request arrival.}\label{eta}
\end{figure}

In every slot, a UE has computation tasks to offload and hence becomes a requester with probability $p$. When a UE is not a requester, it is able to provide D2D offloading service. Therefore, with probability $1-p$ the UE is a potential server. The number of tasks of each requester in every slot is randomly selected between 1 and $W_{max}$. For each requester, we find the potential server UEs within a distance $d$ of the requester where $d$ is the D2D communication range. Suppose that UEs are obedient, then the operator assigns one task to one of these UEs. The probability $\eta$ that a potential server UE receives an offloading request in the obedient UE case can be estimated in simulations. Figure \ref{eta} illustrates the estimations for parameters $p = 0.2$, $v_\text{max} = 20$. Estimating this probability is very important for the conversion of the participation level in continuous time into its counterpart in discrete time. Specifically, a chosen participation level $a$ per unit time in continuous time is converted to a participation probability $\frac{a}{100(1-p)\eta}$ in each slot when the UE is a potential server in discrete time. With this conversion, in the strategic UE case, the operator assigns one task to one of the potential server UE with probability $\frac{a}{100(1-p)\eta}$. The the remaining tasks, if any, are offloaded to the edge server on the BS. UEs can modify their participation decisions in every slot. Moreover, the D2D offloading benefit $b$ and cost $c_i$ in continuous time also have to be converted to their discrete time counterpart. The conversion is also performed for the recovery process and recovery cost.

\subsection{System Dynamics}

\begin{figure}
\centering
\begin{minipage}[t]{.5\textwidth}
  \centering
  \includegraphics[width=0.9\linewidth]{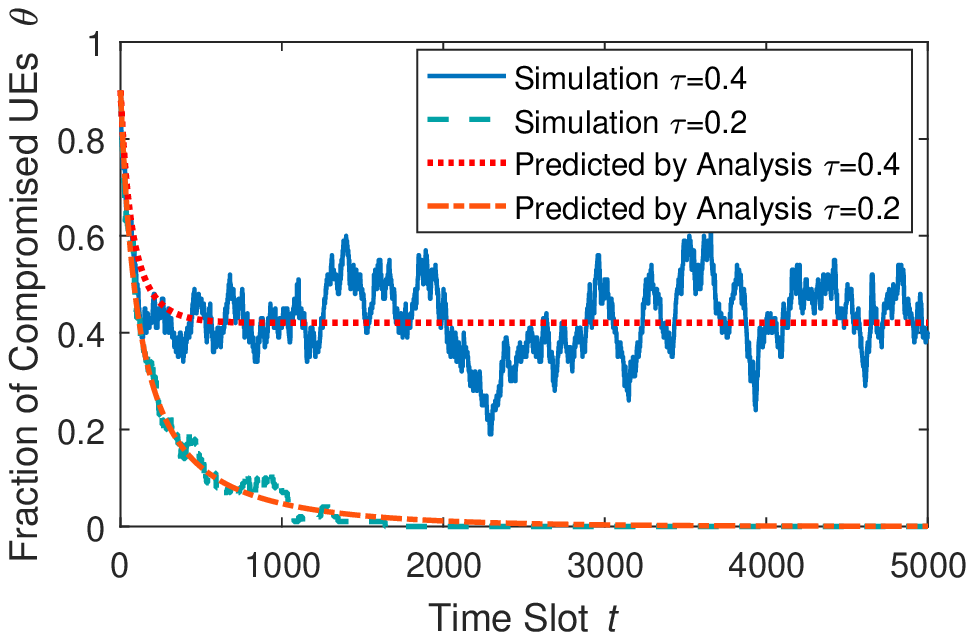}
  \caption{Epidemic dynamics for non-strategic UE.}\label{predetermined_two_type_theta}
\end{minipage}%
\begin{minipage}[t]{.5\textwidth}
  \centering
  \includegraphics[width=0.9\linewidth]{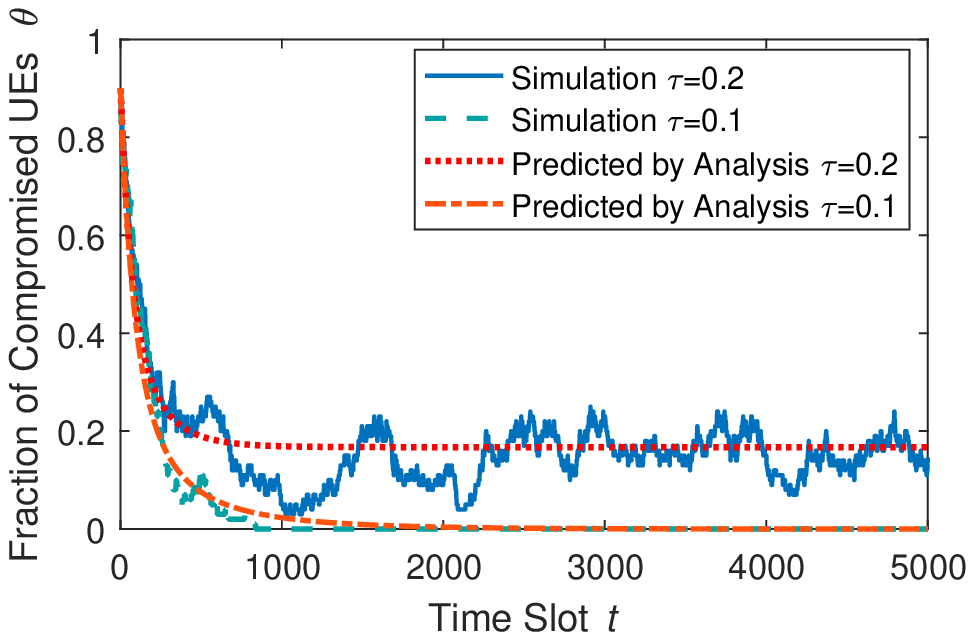}
  \caption{Epidemic dynamics for strategic UEs.}\label{strategic_two_type_theta}
\end{minipage}%
\end{figure}

Figure \ref{predetermined_two_type_theta} illustrates the system dynamics for non-strategic UEs who follow a prescribed participation strategy. Two types of UEs are considered in this simulation. Type 1 UEs adopt a participation level $a_{(1)} = 3$ and Type 2 UEs adopt a participation level $a_{(2)} = 5$. The fractions of these two types of users are $w_{(1)} = 0.3$ and $w_{(2)} = 0.7$, respectively. Therefore, the predicted critical effective infection rate $\tau_c = 0.227$ according to Proposition 6. We fix $\delta = 1$ and show the results for $\beta = 0.2$ and $\beta = 0.4$, which correspond to $\tau = 0.2$ and $\tau = 0.4$, respectively. As shown in Figure \ref{predetermined_two_type_theta}, when $\tau < \tau_c$, the infections extinguish over time. When $\tau > \tau_c$, infections persist in the system at a compromise level around 0.42. Because we used a relatively small finite UE population in the simulation, there are still fluctuations in the results. However, the predicted dynamics by our analysis is in accordance to the simulation results and the fluctuations will be less with a larger UE population.

Figure \ref{strategic_two_type_theta} illustrates the system dynamics for strategic UEs for the same setting as above. The difference is that, since UEs are strategic, they will decide their participation levels by themselves. The user evaluation functions are chosen as $v_{(1)}(x) = \sqrt{x}$ and $v_{(2)} = 1.5\sqrt{x}$. The costs for users are the same $c = 0.35$. The reward offered by the operator is set as $r = 2.2$. Therefore, the critical infection rate is computed to be $\tau_c = 0.12$ according to Theorem 1. As shown in Figure \ref{strategic_two_type_theta}, infections extinguish over time when $\tau = 0.1$ which is smaller than $\tau_c$ and persist when $\tau = 0.2$ which is larger than $\tau_c$. Notice that in the non-strategic UE case, $\tau = 0.2$ will instead make attacks extinguish.

\begin{figure}
  \centering
  \includegraphics[width=3in]{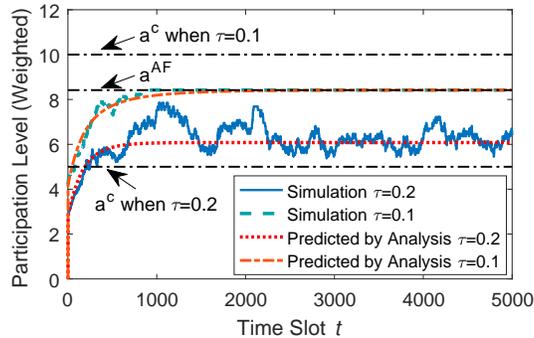}\\
  \vspace{-10pt}
  \caption{Evolution of UE participation levels.}\label{strategic_two_type_weight}
\end{figure}

To better understand what is happening behind this epidemic dynamics, we show the evolution of UE participation levels (weighted average of the two types) in Figure \ref{strategic_two_type_weight}. When the effective infection rate is lower, UEs tend to choose a higher participation level. Regardless of the exact value of $\tau$, the converged participation level does not exceed the optimal participation level $a^{\text{AF}}$ in the attack-free network (which does not depend on $\tau$). When $\tau = 0.2$, the converged value is greater than the corresponding critical participation level $a^c$ (which depends on $\tau$) and hence, the attacks persist. When $\tau = 0.1$, the converged value is smaller than the corresponding critical participation level $a^c$, thereby eliminating the attacks.

\subsection{Optimal Reward Mechanism}

\begin{figure}
\centering
\begin{minipage}[t]{.48\textwidth}
  \centering
  \includegraphics[width=0.9\linewidth]{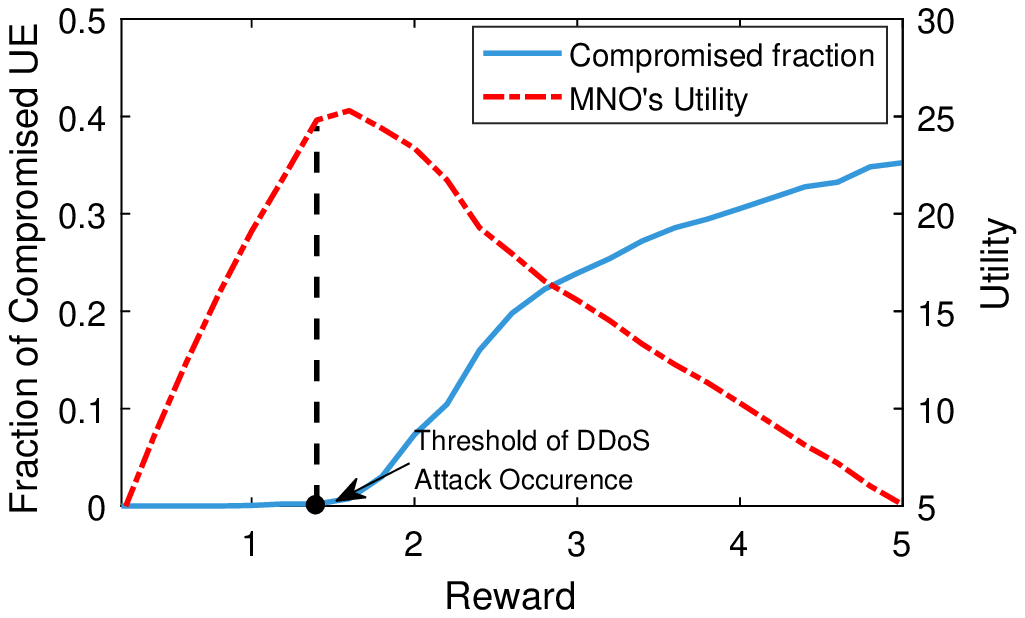}
  \caption{Impact of reward on system compromise state and the operator's utility.}\label{reward_theta_Utility}
\end{minipage}
\begin{minipage}[t]{.48\textwidth}
  \centering
  \includegraphics[width=0.9\linewidth]{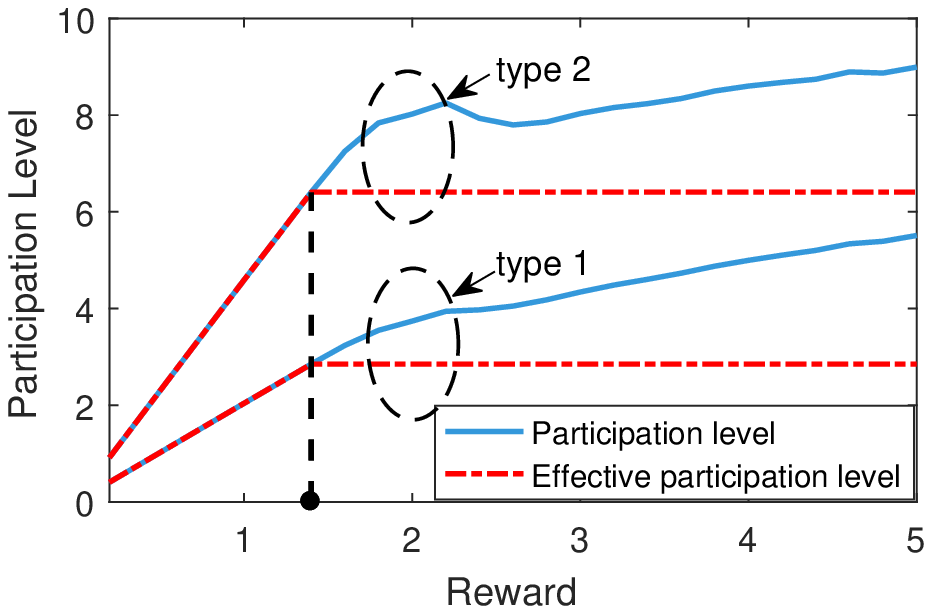}
  \caption{Impact of reward on effective participation level.}\label{reward_participation_level}
\end{minipage}%
\end{figure}

Now we consider the operator's utility maximization problem. In this set of simulations, the benefit for the operator is set as $b = 6$. Figure \ref{reward_theta_Utility} shows the impact of the reward on the fraction of compromised UEs in the network as well as the operator's utility. As the reward increases, UEs have more incentives to participate and when the reward increases to a certain point, infections become persistent. As a result, further increasing the reward $r_0$ decreases the operator's utility since the effective participation level of UEs does not improve as shown in Figure \ref{reward_participation_level}. This is the predicted ``Less is More'' phenomenon. This result is significantly important for the operator to determine the optimal reward mechanism that is security-aware. Figures \ref{tau_reward_utility} and \ref{tau_reward_theta} further show the operator's utility and the fraction of compromised UEs as functions of the reward $r_0$ and the effective infection rate $\tau$. Again, the simulation results are in accordance with our analytical results.

\begin{figure}
\centering
\begin{minipage}[t]{.48\textwidth}
  \centering
  \includegraphics[width=0.9\linewidth]{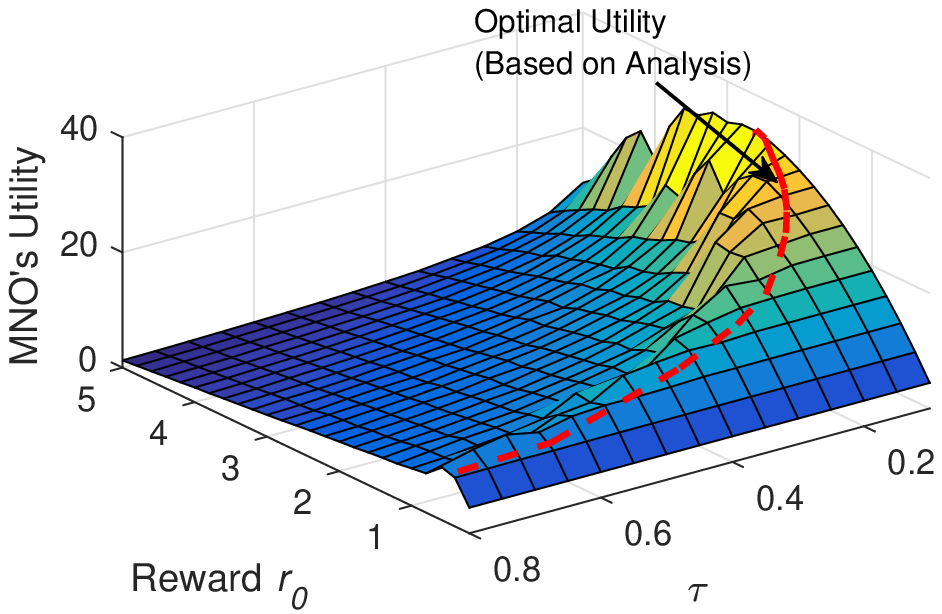}
  \caption{Operator's utility as a function of $r_0$ and $\tau$.}\label{tau_reward_utility}
\end{minipage}
\begin{minipage}[t]{.48\textwidth}
  \centering
  \includegraphics[width=0.9\linewidth]{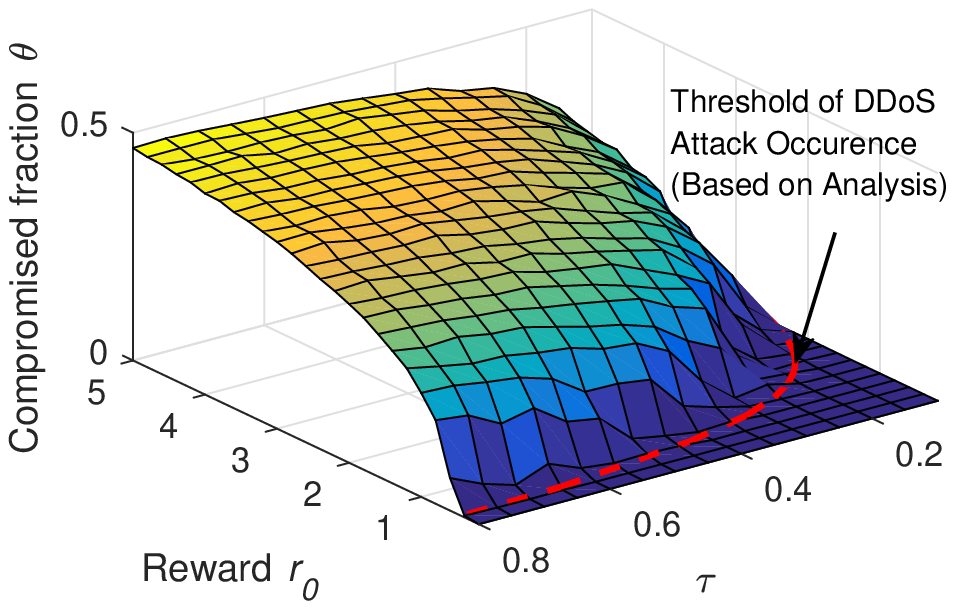}
  \caption{System compromise state as a function of $r_0$ and $\tau$.}\label{tau_reward_theta}
\end{minipage}%
\end{figure}

\subsection{Impact of UE Mobility}

\begin{figure}
\centering
\begin{minipage}[t]{.48\textwidth}
  \centering
  \includegraphics[width=0.9\linewidth]{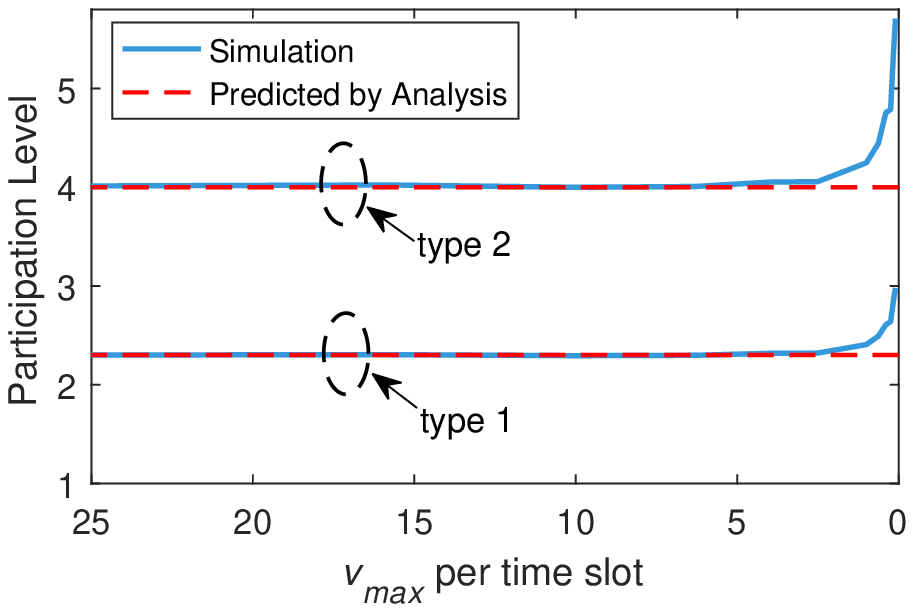}
  \caption{Impact of mobility on the participation levels.}\label{v_participation}
\end{minipage}
\begin{minipage}[t]{.48\textwidth}
  \centering
  \includegraphics[width=0.9\linewidth]{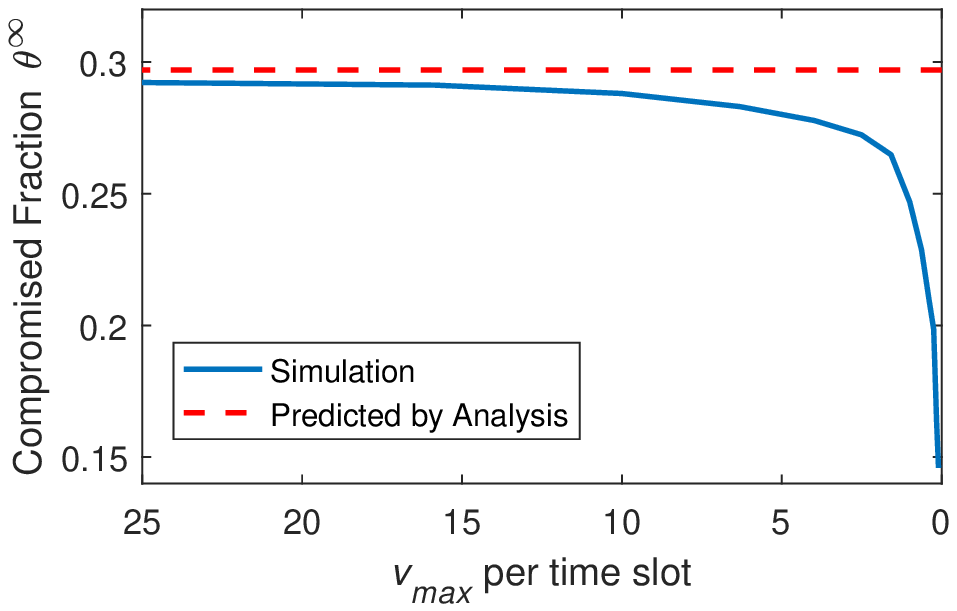}
  \caption{Impact of mobility on the system compromise state.}\label{v_theta}
\end{minipage}%
\end{figure}

Finally, we investigate the impact of UE mobility on the accuracy of our model and analysis by varying the moving speed of UEs. Figures \ref{v_participation} and \ref{v_theta} show how the UE participation levels and the fraction of compromised UEs change with UE mobility, respectively. In our model, we assumed that a UE receives requests from other UEs uniformly randomly, which is a good approximation when UEs' mobility is fast. However, when UEs' mobility is slow, they will more likely have localized interactions with only a subset of UEs with high probability. For instance, in practice, people are more likely to appear in the same locations at the same time with their family, friends and colleagues. As shown in Figures \ref{v_participation} and \ref{v_theta}, when UE mobility is fast, our analytical results are very well aligned with the simulation results. However, when UE mobility is slow, there is an obvious deviation of the simulation results from our analysis, suggesting that new models are needed to handle low mobility network scenarios. This is an interesting future work direction.

\section{Conclusions}
In this paper, we investigated the extremely important but much less studied incentive mechanism design problem in dynamic networks where users' incentives and security risks they face are intricately coupled. We adopted a dynamic non-cooperative game theoretic approach to understand how user collaboration incentives are influenced by interdependent security risks such as the infectious attack risks, and how the attack risks evolve, propagate, persist and extinguish depending on users' strategic choices. This understanding allowed us to develop security-aware incentive mechanisms that are able to combat and mitigate attacks in D2D offloading systems. Our study leverages the classic epidemic models, but on the other hand, it represents a significant departure from these models since users are strategically choosing their actions rather than obediently following certain prescribed rules. Our model and analysis not only provide new and important insights and guidelines for designing more efficient and more secure D2D offloading networks but also can be adapted to solve many other challenging problems in other cooperative networks where users face interdependent security risks. Future work includes investigating user interaction models that are more localized and social network-based, and user heterogeneity in terms of adopted security technologies.


%

%
%
%
%
%

\ifCLASSOPTIONcaptionsoff
  \newpage
\fi


\bibliographystyle{IEEEtran}
\bibliography{refs,supp}

\end{document}